\newcommand{\tabincell}[2]{\begin{tabular}{@{}#1@{}}#2\end{tabular}}
\newcommand\at[2]{\left.#1\right|_{#2}}
  \providecommand\BibTeX{{%
    \normalfont B\kern-0.5em{\scshape i\kern-0.25em b}\kern-0.8em\TeX}}}
\begin{document}

\fancyhead{}
\title{How Powerful is Graph Convolution for Recommendation?}
 \author{Yifei Shen, Yongji Wu, Yao Zhang, Caihua Shan, Jun Zhang, Khaled B. Letaief, Dongsheng Li
 }
 \affiliation{%
 	\institution{HKUST, Duke University, Fudan University, Microsoft Research Asia \\
 		yshenaw@connect.ust.hk, wuyongji317@gmail.com, \{yaozhang,dongshengli\}@fudan.edu.cn, \\ \{eejzhang, eekhaled\}@ust.hk, 	caihuashan@microsoft.com
 }
\country{}
 }


\begin{abstract}
  Graph convolutional networks (GCNs) have recently enabled a popular class of algorithms for collaborative filtering (CF). Nevertheless, the theoretical underpinnings of their empirical successes remain elusive. In this paper, we endeavor to obtain a better understanding of GCN-based CF methods via the lens of graph signal processing. By identifying the critical role of smoothness, a key concept in graph signal processing, we develop a unified graph convolution-based framework for CF. We prove that many existing CF methods are special cases of this framework, including the neighborhood-based methods, low-rank matrix factorization, linear auto-encoders, and LightGCN, corresponding to different low-pass filters. Based on our framework, we then present a simple and computationally efficient CF baseline, which we shall refer to as Graph Filter based Collaborative Filtering (GF-CF). Given an implicit feedback matrix, GF-CF can be obtained in a closed form instead of expensive training with back-propagation. Experiments will show that GF-CF achieves competitive or better performance against deep learning-based methods on three well-known datasets, notably with a $70\%$ performance gain over LightGCN on the Amazon-book dataset.
\end{abstract}

\begin{CCSXML}
<ccs2012>
   <concept>
       <concept_id>10002951.10003317.10003347.10003350</concept_id>
       <concept_desc>Information systems~Recommender systems</concept_desc>
       <concept_significance>500</concept_significance>
       </concept>
 </ccs2012>
\end{CCSXML}

\ccsdesc[500]{Information systems~Recommender systems}

\keywords{collaborative filtering, graph convolution, graph signal processing}


\maketitle

\section{Introduction}
Recommender systems have achieved great successes in many businesses, e.g., for product recommendation on Amazon~\cite{amazon} and playlist generation on Youtube~\cite{covington2016deep}, etc. As the algorithmic effectiveness will have a direct impact on the commercial success, building a good recommendation engine, especially via collaborative filtering (CF), remains an active research area, with consistent innovations in both conventional methods~\cite{rendle2012bpr,koren2008factorization,chen2021scalable} and recently emerged deep learning approaches~\cite{liang2018variational,he2017neural,he2020lightgcn}. 

Over the past decade, we have witnessed great progress in CF algorithms. Model-based methods largely resort to low-dimensional structures in high-dimensional data \cite{Wright-Ma-2021}, e.g., low-rank matrix factorization \cite{hu2008collaborative,pan2008one,chen2021scalable,mrma}  and autoencoders \cite{wu2016collaborative,liang2018variational,steck2020autoencoders}. On the other hand, neighborhood-based methods \cite{aiolli2013efficient,verstrepen2014unifying} achieve competitive performance based on simple similarity measures, e.g., the cosine similarity between items. Furthermore, these two types of methods can be incorporated together to improve the performance, e.g., SVD++ \cite{koren2008factorization}. From the graph perspective, the neighborhood-based methods and SVD++ effectively exploit the one-hop information in the user-item interaction graph. 

To take advantage of the rich multi-hop neighborhood information, graph convolutional networks (GCNs), e.g., GC-MC \cite{berg2017graph}, NGCF \cite{wang2019neural}, LightGCN \cite{he2020lightgcn}, have been recently proposed  and become state-of-the-art methods for CF. NGCF \cite{wang2019neural} was inspired by the GCNs developed for attribute graphs \cite{kipf2016semi}, and it inherits the key ingredients from GCNs, including initial embeddings, feature transformation, neighborhood aggregation, and nonlinear activation. As the graphs in CF tasks are non-attributed, these operations may not be necessary \cite{he2020lightgcn}. Therefore, in LightGCN \cite{he2020lightgcn}, only the most important components, i.e., trained initial embeddings and graph convolution, are preserved. Removing the unnecessary components leads to easier training and better generalization \cite{xu2021optimization,xu2020neural}, and thus LightGCN significantly outperforms NGCF in both accuracy and efficiency. While these empirical studies have produced promising results, the underlying reasons for the effectiveness of these methods remain elusive. From the theoretical perspective, an intriguing question is \emph{what plays an essential role in the success of GCN-based methods for CF}. From the practical perspective, it is interesting to investigate \emph{to what extent we can reduce the training cost while effectively exploiting the rich information of the user-item interaction graph}.

This paper endeavors to obtain a better understanding of GCN-based methods and develop a unified framework based on graph convolution that incorporates classic methods. In particular, we identify the importance of a key concept in graph signal processing in developing CF algorithms, namely, \emph{smoothness}. Conceptually, if a user interacted with an item, then  their embeddings should be similar. In graph signal processing, the similarity between the embeddings of the interacted user-item pair defines the smoothness of the embedding. Meanwhile, low-pass filters on graphs, e.g., the light convolution in LightGCN \cite{he2020lightgcn}, are used to promote the smoothness of graph signals. We will therefore argue that it is the smoothness of the embeddings and the low-pass filtering that play a pivotal role in GCN-based methods. By theoretical analysis and experiments, we will show that the performance of \emph{untrained} LightGCN is competitive to a trained one when the embedding dimension is sufficiently large, due to the smoothing effect of the light convolution. Inspired by this finding, we derive a closed-form solution for the untrained LightGCN with Infinitely Dimensional Embedding (LGCN-IDE). It is shown that LGCN-IDE outperforms LightGCN by more than $40\%$ on the Amazon-book dataset.

Motivated by its simplicity and effectiveness, we extend LGCN-IDE to incorporate general low-pass filters, which form a unified framework for CF. Surprisingly, it is proved that the neighborhood-based methods \cite{aiolli2013efficient}, low-rank matrix factorization \cite{chen2021scalable}, and linear auto-encoders \cite{steck2020autoencoders} are all special cases of this framework with various classic low-pass filters. This finding verifies the effectiveness of graph convolution with low-pass filters for CF. We further present a simple and computationally efficient CF method, which is an integration of linear filters and an ideal low-pass filter. Given an implicit feedback matrix, our proposed method has a closed-form solution and as such it would not require expensive training. More importantly and despite of its simplicity, the proposed method achieves competitive or better performance compared with deep learning methods.

To summarize, this work has made the following contributions.
\begin{enumerate}
    \item By identifying the critical role of the smoothness and low-pass filtering, we provide a novel perspective to understand the algorithms for CF.
    \item Using both theoretical justification and experiments, we show that the \emph{untrained} LightGCN can achieve competitive performance as a trained one when the embedding dimension is sufficiently large. We further derive a closed-form solution for untrained LightGCN with infinitely dimensional embedding.
    \item Built upon the closed-form solution, we develop a general graph filter-based framework for CF. We prove that the neighborhood-based methods, linear auto-encoders, and low-rank matrix factorization are special cases of this framework, corresponding to various classic low-pass filters.
    \item We present a simple and computationally efficient method, named GF-CF. With a small fraction of training time, GF-CF achieves competitive or higher performance compared with the state-of-the-art deep learning methods on three well-known datasets.
\end{enumerate}
The rest of this paper is organized as follows. Section \ref{sec:pre} introduces some preliminaries for the rest of this paper. Section \ref{sec:smooth} demonstrates the importance of smoothness. Section \ref{sec:framework} provides the details of our method. Section \ref{sec:exp} presents the experimental results. Section \ref{sec:literature} discusses the related works in CF and GCNs. Finally, we conclude this work in Section \ref{sec:con}. The code to reproduce the experiments is available at \url{https://github.com/yshenaw/GF_CF}.

\section{Preliminaries}
\label{sec:pre}
\subsection{Notations and Terminology}
This subsection presents some useful notations and definitions. We first define user set $\mathcal{U}$ and item set $\mathcal{I}$. As in \cite{he2020lightgcn}, this paper considers the recommendation problem with implicit feedback. The implicit feedback matrix $\bm{R} \in \{0,1\}^{|\mathcal{U}| \times |\mathcal{I}|}$ is defined as  follows:
\begin{align*}
    \bm{R}_{u,i} = \left\{
             \begin{array}{lr}
             1, & \text{if $(u,i)$ interaction is observed,}   \\
             0, & \text{otherwise},
             \end{array}
\right.
\end{align*}
and $\bm{r}_u$ denotes the $u$-th row of $\bm{R}$.

The adjacency matrix of the user-item interaction graph is given by 
\begin{align} \label{graph:bi}
    \bm{A} = \begin{bmatrix}
    \bm{0} & \bm{R} \\
    \bm{R}^T & \bm{0} 
    \end{bmatrix}.
\end{align}
In this bipartite graph, we denote the neighbors of node $k$ as $\mathcal{N}_k$, and its cardinality as $N_k = |\mathcal{N}_k|$.

We denote the all one column vector of any dimension as $\bm{1}$, and degree matrices as $\bm{D}_U = \text{Diag}(\bm{R} \cdot \bm{1})$ and $\bm{D}_I = \text{Diag}(\bm{1}^T \bm{R})$. The normalized rating matrix is denoted as 
\begin{align*}
    \tilde{\bm{R}} = \bm{D}_U^{-\frac{1}{2}} \bm{R} \bm{D}_I^{-\frac{1}{2}},
\end{align*}
with $\tilde{\bm{r}}_u$ as the $u$-th row of $\tilde{\bm{R}}$. Similarly, the normalized user-item adjacency matrix is given by
\begin{align*}
    \tilde{\bm{A}} = \begin{bmatrix}
    \bm{0} & \tilde{\bm{R}} \\
    \tilde{\bm{R}}^T & \bm{0} 
    \end{bmatrix}.
\end{align*}

We also define the item-item normalized adjacency matrix as 
\begin{align*}
    \tilde{\bm{P} } = \tilde{\bm{R}}^T\tilde{\bm{R}}.
\end{align*}

We then define an important concept, namely, Stiefel manifold, which can help to connect low-rank matrix factorization and GCN-based methods in Section \ref{sec:related}.

\begin{definition} (Stiefel manifold) The Stiefel manifold $\text{St}(n,m)$ is defined as the subspace of orthonormal N-frames in $\mathbb{R}^n$, namely,
\begin{equation}
\text{St}(n,m) = \{\bm{\Gamma} \in \mathbb{R}^{n\times m}: \bm{\Gamma}^T\bm{\Gamma} = \bm{I}  \}
\end{equation}
where $\bm{I}$ is the identity matrix.
\end{definition}
\subsection{Graph Signal Processing}
In this subsection, we introduce basic concepts of graph signal processing \cite{dong2020graph,ramakrishna2020user}. We consider a weighted undirected graph $\mathcal{G} = (\mathcal{V},\mathcal{E})$ with $n$ nodes where $\mathcal{V}$ and $\mathcal{E}$ denote the vertex set and edge set, respectively. The graph can be represented as an adjacency matrix $\bm{A} \in \mathbb{R}^{n \times n}$. A graph signal is defined as a function $x:\mathcal{V} \rightarrow \mathbb{R}$ and it can be represented as a $n$-dimensional vector $\bm{x} = [x(i)]_{i \in \mathcal{V}}$. For a graph signal, the derivative is defined as $(\nabla \bm{x})_{i,j} = \sqrt{A_{i,j} } (x_i - x_j)$.

The smoothness of a graph signal can be measured by the \emph{graph quadratic form}, which is the squared norm of the graph derivative as defined below:
\begin{align*}
    S(\bm{x}) = \frac{1}{2} \|\nabla \bm{x}\|_F^2 = \sum_{i,j} A_{i,j} (x_i - x_j)^2 = \bm{x}^T \bm{L} \bm{x}.
\end{align*}
Here, $\bm{L} = \bm{D} - \bm{A}$ is the graph Laplacian matrix\footnote{The graph Laplacian matrix can also be defined by some normalized version of $\bm{D} - \bm{A}$, e.g., $\tilde{\bm{L}} = \bm{I} - \tilde{\bm{A}}$.}. A smaller $\frac{S(\bm{x})}{\|\bm{x}\|_2}$ indicates smoother signals. 

In many applications, a graph signal is often described in a vector form $x: \mathcal{V} \rightarrow \mathbb{R}^d$ and its smoothness can be written as follows:
\begin{align} \label{eq:vec_smooth}
    S_2(\bm{x}) = \sum_{i,j} A_{i,j} \|\bm{x}_i - \bm{x}_j\|_2^2.
\end{align}

As $\bm{L}$ is real and symmetric, its eigendecomposition is given by $\bm{L} = \bm{U} \bm{\Lambda} \bm{U}^T$ where $\bm{\Lambda} = \text{Diag}(\lambda_1, \cdots, \lambda_n)$, $\lambda_1 \leq \lambda_2 \leq \dots \leq \lambda_n$, and $\bm{U} = [\bm{u}_1, \cdots, \bm{u}_n]$ with $\bm{u}_i \in \mathbb{R}^n$ being the eigenvector for eigenvalue $\lambda_i$.

Next we discuss the frequency of the graph signal and define Fourier Transform on graphs. Intuitively, the graph signal has a  higher frequency if it is more 
oscillatory and not smooth. As $\lambda_1$ is the smallest eigenvalue, for any graph signal $\bm{x} \in \mathbb{R}^n$, we have $\frac{S(\bm{x})}{\|\bm{x}\|_2} \geq \frac{S(\bm{u}_1)}{\|\bm{u}_1\|_2}$. Thus, the eigenvector with a smaller eigenvalue corresponds to a lower frequency signal component. We can define the Graph Fourier Transform (GFT) basis as the eigenvector matrix $\bm{U}$ and we call $\hat{\bm{x}} = \bm{U}^T \bm{x}$ as the GFT of the graph signal $\bm{x}$. Similar to the Fourier transform, GFT is a linear orthogonal transform and its inverse transform is given by $\bm{x} = \bm{U} \hat{\bm{x}}$. GFT enables us to define graph filters and graph convolution. 
\begin{definition} (Graph Filter)
Given a graph Laplacian matrix, as well as its eigenvectors and eigenvalues, then the graph filter $\mathcal{H}(\bm{L})$ is defined as follows:
\begin{align*}
    \mathcal{H}(\bm{L}) = \bm{U} \text{Diag}(h(\lambda_1), \cdots, h(\lambda_n)) \bm{U}^T.
\end{align*}
where $h(\cdot)$ is the filter defined on the eigenvalues. 
\end{definition}

\begin{definition} (Graph Convolution) The graph convolution of a input signal $\bm{x}$ and the filter  $\mathcal{H}(\bm{L})$ is defined as follows:
\begin{align*}
    \bm{y} = \mathcal{H}(\bm{L}) \bm{x} = \bm{U} \text{Diag}(h(\lambda_1), \cdots, h(\lambda_n)) \bm{U}^T\bm{x}.
\end{align*}
\end{definition}
Similar to the definition of convolution in classic signal processing, the graph signal is transformed by GFT $\bm{U}^T$, multiplied by a filter $h(\cdot)$, and transformed back by inverse GFT $\bm{U}$. In the context of CF, the graph signal is often the observed ratings for a given user \cite{chen2021scalable}, or the initial embeddings of users/items \cite{wang2019neural,he2020lightgcn}.

In the signal processing literature, the signal is often smooth and with low frequency, and the noise is often non-smooth and with a high frequency. One important class of filters is the \emph{low-pass} filters, which promotes smoothness of graph signals for denoising. The graph low-pass filters are defined as follows. 
\begin{definition} (Low-pass Filter) \label{def:lpf}
    For $k = 1, \cdots, n-1$, we define the ratio
    \begin{align} \label{eq:low-pass}
        \eta_k := \frac{\max\{|h(\lambda_{k+1})|, \cdots, |h(\lambda_{n})| \} }{\min\{ |h(\lambda_{1})|, \cdots, |h(\lambda_{k})| \} }.
    \end{align}
    The graph filter $\mathcal{H}(\bm{L})$ is $k$-low-pass if and only if the low-pass ratio satisfies $\eta_k \in [0,1)$.
\end{definition}
The low-pass ratio defines how much of the high-frequency component of a signal is allowed to pass compared to the low-frequency components. If $\eta_k < 1$, then the filter passes low-frequency signals and is  called a low-pass filter. We here list some important low-pass filters and will connect these filters with classic methods for recommendation in Section \ref{sec:related}.
\paragraph{Linear Filter} The linear filter is given by 
\begin{align} \label{fil:linear}
    h(\lambda_i) = \sum_{k=0}^{K} \alpha_k \lambda_i^k,
\end{align}
where $\alpha_k$ is the filter's coefficient. It is called \emph{linear} due to its similarity with linear time invariant filters in classic signal processing. We will show that this filter corresponds to LightGCN and neighborhood-based methods.

\paragraph{Ideal Low-pass Filter} 
The ideal low-pass filter has a cut-off frequency $\bar{\lambda}$. The filter is defined as 
\begin{align} \label{fil:low}
    h(\lambda_i) = \left\{
             \begin{array}{lr}
             1, & \text{if } \lambda_i \leq \bar{\lambda}  \\
             0, & \text{otherwise}.
             \end{array}
\right.
\end{align}
It is called \emph{ideal} as the high-frequency signals are ideally cut off with no leakage. We will show that this filter corresponds to the low-rank matrix factorization method. 

\paragraph{Opinion Dynamics} The opinion dynamics are a \emph{graph diffusion} process, which is a GF-AR(1) model \cite{friedkin2011formal}: $\bm{y}_{t+1} = (1 - \beta) (\bm{I} - \alpha \bm{L}) \bm{y}_t + \beta \bm{x}_t$. The steady state opinions are given by $\bm{y} = \lim_{t \rightarrow \infty} \bm{y}_t  = (\bm{I} + \tilde{\alpha} \bm{L})^{-1} \bm{x} = \mathcal{H}(\bm{L}) \bm{x}$ where $\tilde{\alpha} = \beta (1 - \alpha) \alpha$. Thus, the corresponding graph filter is
\begin{align} \label{fil:opi}
    h(\lambda_i) = \frac{1}{1 + \tilde{\alpha} \lambda_i}.
\end{align}
In opinion dynamics, the matrix inverse or eigenvalue decomposition is required. Thus, applying this filter introduces a high memory cost. We will show that it is closely related to the linear auto-encoder method.
\begin{figure*}[htbp]
    \centering
    \subfigure[Recall on Gowalla dataset.]
    {
        \includegraphics[width=0.48\columnwidth]{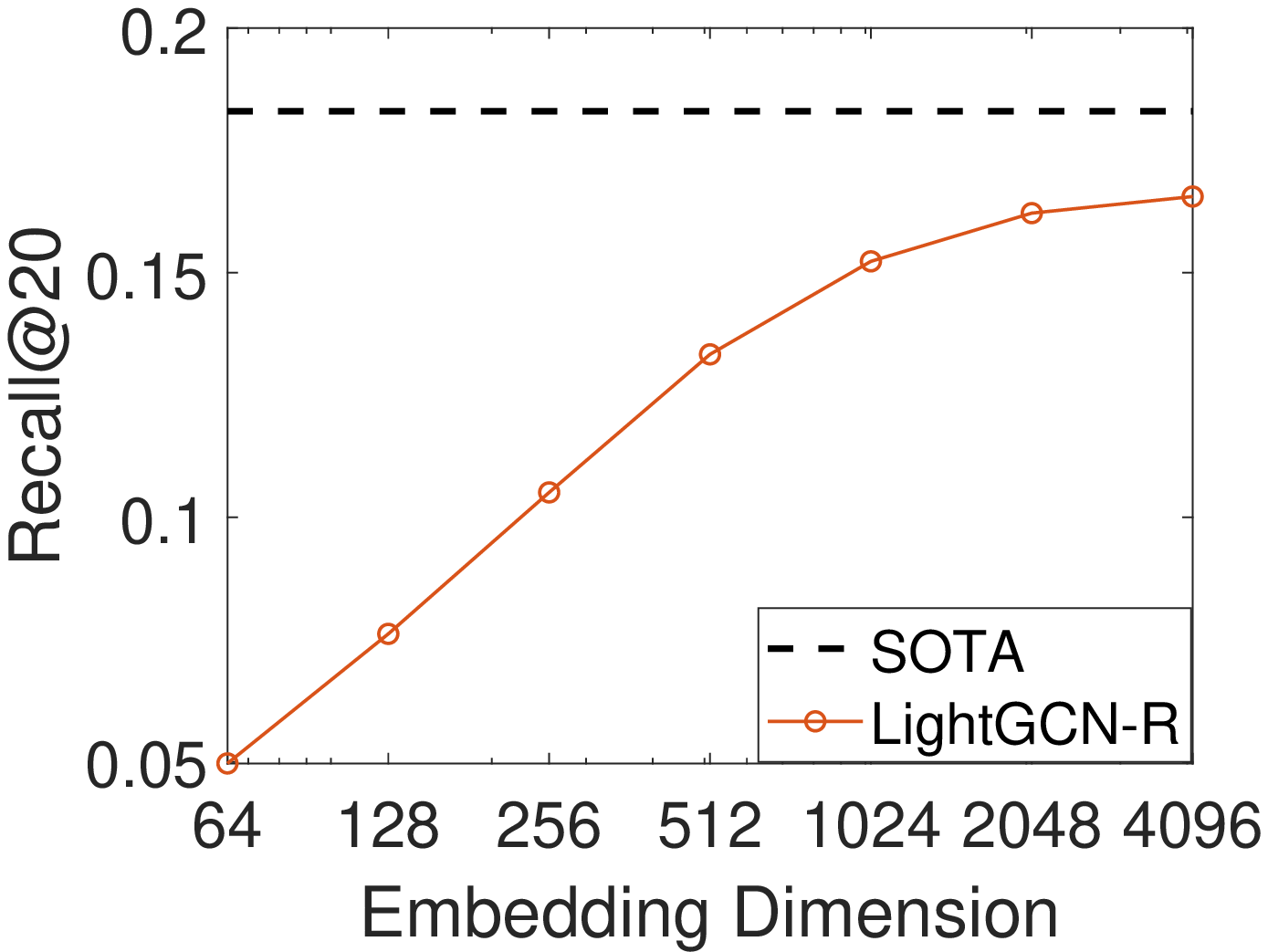}

    }
    \subfigure[NDCG on Gowalla dataset.]
    {
        \includegraphics[width=0.48\columnwidth]{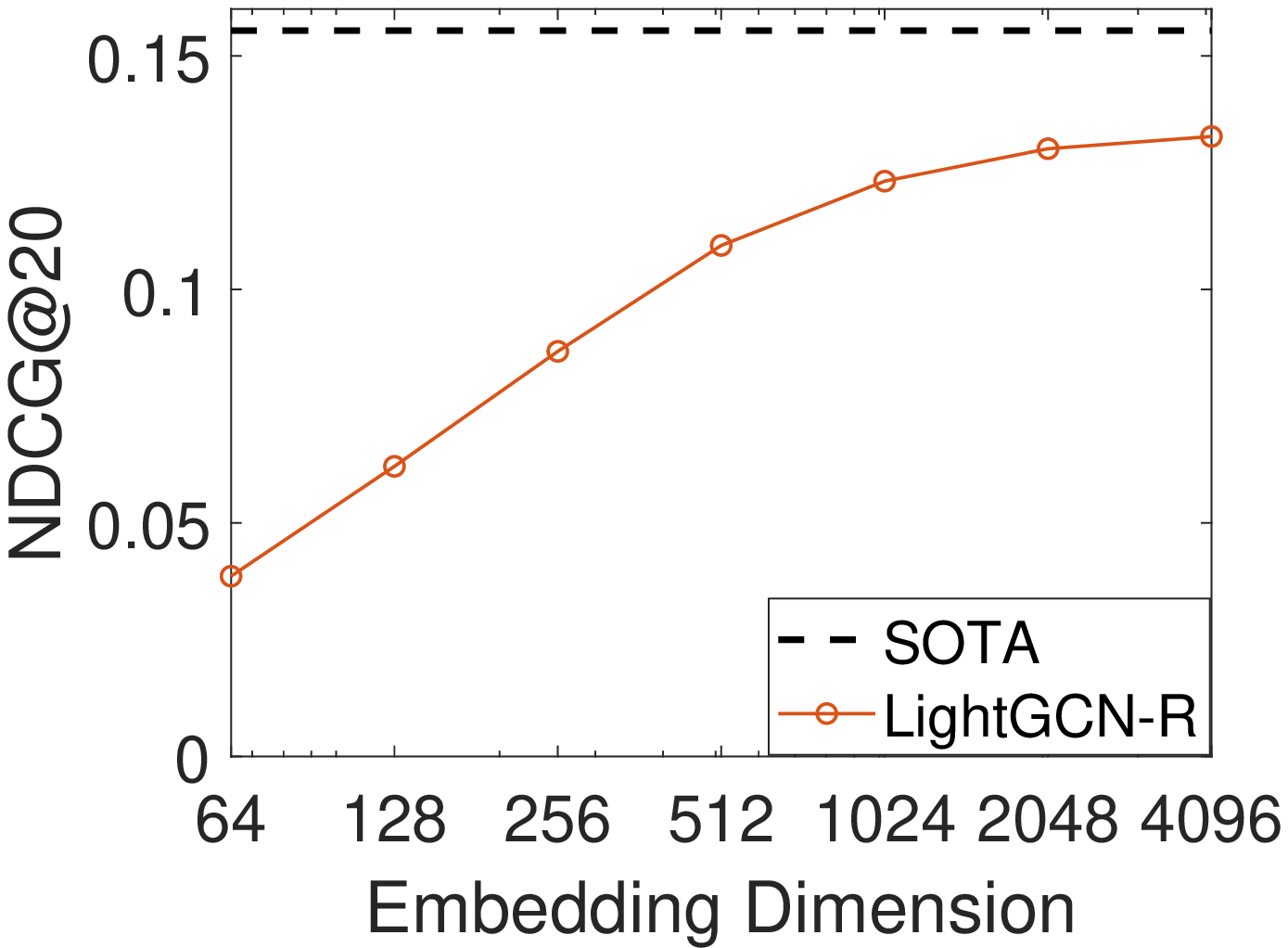}

    }
    \subfigure[Recall on Amazon-book dataset.]
    {
        \includegraphics[width=0.48\columnwidth]{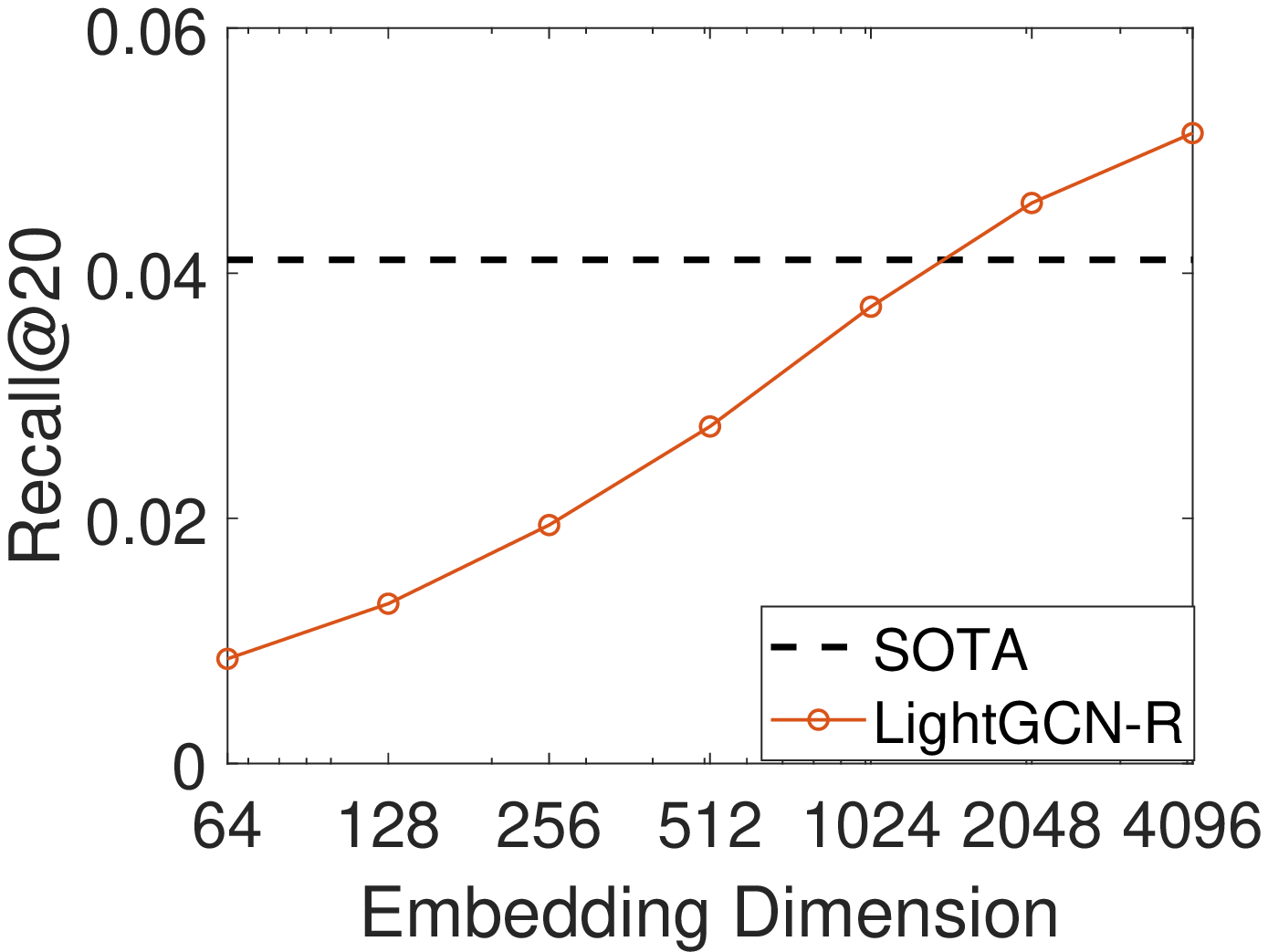}

    }
    \subfigure[NDCG on Amazon-book dataset.]
    {
        \includegraphics[width=0.48\columnwidth]{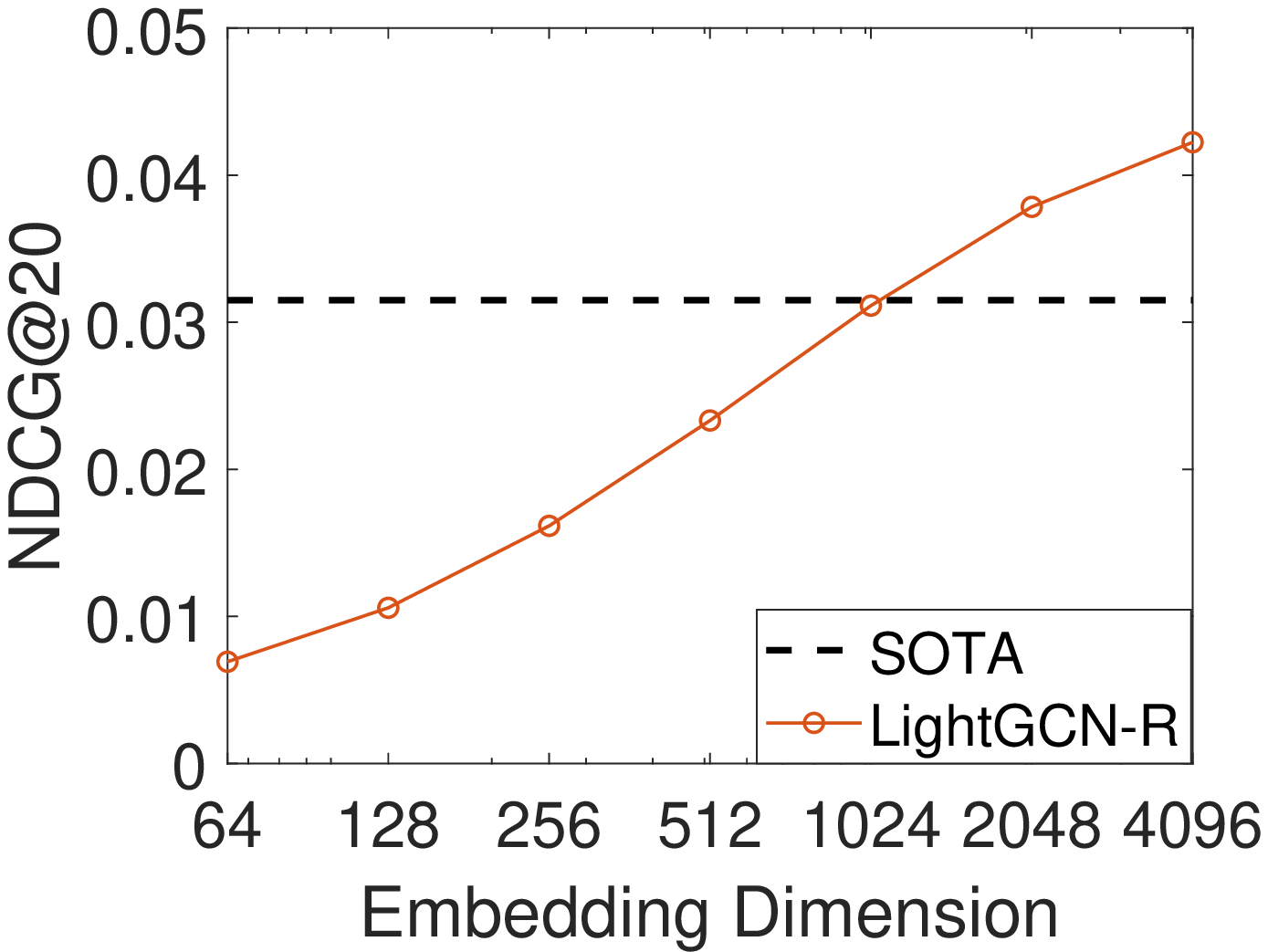}

    }
        \caption{
        Performance of \emph{untrained} LightGCN versus SOTA, where the SOTA line is LightGCN's performance reported in \cite{he2020lightgcn} and LightGCN-R denotes the untrained LightGCN with different embedding dimensions. }
    \label{fig:untrained}
\end{figure*}

\subsection{LightGCN Brief} \label{sec:lightgcn}
LightGCN \cite{he2020lightgcn} is a state-of-the-art GCN-based method in CF. In this paper, LightGCN will be used as the vehicle to elaborate our theory and adopted as the main baseline for performance comparison. 

LightGCN leverages the user-item interaction graph to propagate the embedding as follows:
\begin{align*}
    \bm{E}^{(k+1)} = \tilde{\bm{A}}\bm{E}^{(k)},
\end{align*}
where $\bm{E}^{(0)} \in \mathbb{R}^{(|\mathcal{U}|+|\mathcal{I}|) \times d}$ is the learnable initial embedding matrix of users and items. For a $K$-layer LightGCN, the final embeddings can be computed as follows:
\begin{align} \label{LGCN:final}
    \bm{E} &= \alpha_0 \bm{E}^{(0)} + \alpha_1 \bm{E}^{(1)} + \cdots + \alpha_K \bm{E}^{(K)} \notag\\
    &= \alpha_0 \bm{E}^{(0)} +  \alpha_1 \tilde{\bm{A}}\bm{E}^{(0)} + \cdots + \alpha_K \tilde{\bm{A}}^{K}\bm{E}^{(0)}.
\end{align}
The model prediction is defined as the inner product of the user's and item's final representation $y_{ui} = \bm{e}_u^T \bm{e}_i$, where $\bm{e}_u$ and $\bm{e}_i$ are the corresponding rows of $\bm{E}$.

To optimize LightGCN, the Bayesian personalized ranking (BPR) loss \cite{rendle2012bpr} is adopted:
\begin{align} \label{loss:bpr}
    l_{\text{BPR}} = - \sum_{u = 1}^{|\mathcal{U}|} \sum_{i \in \mathcal{N}_u} \sum_{j \notin \mathcal{N}_u } \log \sigma(\bm{e}_u^T \bm{e}_i -  \bm{e}_u^T \bm{e}_j).
\end{align}

\section{On the Importance of Smoothness and Low-pass Filtering}\label{sec:smooth}
In this section, we identify the importance of smoothness and low-pass filters in CF, by using the light convolution in LightGCN as a specific example.

The embeddings play an essential role in CF while smoothness is a key concept in graph signal processing. We observe that there are strong connections between the good embeddings and their smoothness on the graph. We consider the dot product based embedding model. Specifically, let $\bm{e}_u$ denote the embedding for the $u$-th user and $\bm{e}_i$ denote the embedding for the $i$-th item. The predicted score for the $u$-th user and $i$-th item is defined by the dot product $\bm{e}_u^T \bm{e}_i$. If $\bm{R}_{u,i} = 1$, we should promote the similarities between $\bm{e}_u$ and $\bm{e}_i$. By the definition of smoothness of graph signals \eqref{eq:vec_smooth}, if $\bm{e}_u$ and $\bm{e}_i$ are similar for connected user-item pairs, the embeddings are smooth signals on the graph. Consequently, optimizing loss functions, e.g., BPR loss \eqref{loss:bpr}, and enhancing the smoothness of the embeddings share the same goal: promoting the similarity between $\bm{e}_u$ and $\bm{e}_i$ for $\bm{R}_{u,i} = 1$. 

The above discussion provides a qualitative intuition for the role of smoothness in embeddings. We will now analyze the linear filter in LightGCN to obtain quantitative results. The LightGCN consists of two components: the initial embedding and a linear filter. If the \emph{untrained} LightGCN achieves good performance, it must be the linear filter playing the essential role as the initial embedding is random. The next proposition shows that untrained LightGCN will have a low BPR loss under certain conditions.

\begin{theorem} \label{prop:untrain}
    Denote $N_{\max} = \max_i N_i$ and $N_{\min} = \min_i N_i$ where $i \in \mathcal{U} \cup \mathcal{I}$. If $\bm{E}^{(0)} \in \mathbb{R}^{(|\mathcal{I}| + |\mathcal{U}|) \times d}$ follows an i.i.d. uniform distribution over the unit sphere with
    \begin{align} \label{eq:dims}
        d > \frac{ C N_{\max}^{3} \log(|\mathcal{I}| + |\mathcal{U}|)}{N_{\min}},
    \end{align}
    then for a one-layer untrained LightGCN, we have
    \begin{align} \label{eq:rnd_bpr}
        \mathbb{P} \left(\left\{\bm{e}^{(1)T}_u \bm{e}^{(0)}_i   > \bm{e}_u^{(1)T} \bm{e}^{(0)}_j | (u,i) \in \mathcal{S}_1, (u,j) \in \mathcal{S}_2\right\} \right) \geq 3/4,
    \end{align}
    where $C$ is an absolute constant, $\mathcal{S}_1 = \{(u,i)|\bm{R}_{u,i}=1\}$, $\mathcal{S}_2 = \{(u,j)|\bm{R}_{u,j}=0\}$.
\end{theorem}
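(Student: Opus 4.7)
The plan is to decompose each score $\bm{e}_u^{(1)T} \bm{e}_\ell^{(0)}$ into a deterministic ``signal'' term plus a mean-zero ``noise'' term, and to show that once $d$ is as large as stated, the signal uniformly dominates the noise with probability at least $3/4$. Using the one-layer rule $\bm{e}_u^{(1)} = \sum_{k \in \mathcal{N}_u} \tilde{A}_{u,k}\bm{e}_k^{(0)}$ together with $\|\bm{e}_\ell^{(0)}\|_2 = 1$, expansion gives
\[
\bm{e}_u^{(1)T} \bm{e}_\ell^{(0)} \;=\; \tilde{A}_{u,\ell}\,\mathbf{1}[\ell \in \mathcal{N}_u] \;+\; \sum_{k \in \mathcal{N}_u \setminus \{\ell\}} \tilde{A}_{u,k}\,\langle \bm{e}_k^{(0)},\bm{e}_\ell^{(0)}\rangle.
\]
For $(u,i)\in\mathcal{S}_1$ the first piece contributes a strictly positive signal $\tilde{A}_{u,i} = 1/\sqrt{N_u N_i} \geq 1/N_{\max}$; for $(u,j)\in\mathcal{S}_2$ it vanishes and only noise remains. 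It therefore suffices to show that every noise sum appearing across the relevant $(u,i,j)$ triples is, uniformly, at most half of $1/N_{\max}$.

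For a single noise sum I would condition on the shared endpoint $\bm{e}_\ell^{(0)}$ and invoke the standard fact that the inner product of an independent $S^{d-1}$-uniform vector against a fixed unit vector is subgaussian with parameter of order $1/\sqrt{d}$. The conditional sum is then a weighted sum of independent subgaussians with parameter
\[
\sigma_u \;\leq\; C' \sqrt{\frac{1}{d}\sum_{k\in\mathcal{N}_u}\tilde{A}_{u,k}^2} \;\leq\; \frac{C'}{\sqrt{N_{\min}\,d}},
\]
since $\sum_{k\in\mathcal{N}_u}\tilde{A}_{u,k}^2 = \sum_{k\in\mathcal{N}_u}1/(N_u N_k) \leq 1/N_{\min}$. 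A Hoeffding-type tail bound yields $\Pr(|\mathrm{noise}|>t) \leq 2\exp(-c\,t^2 N_{\min} d)$. Union-bounding over all triples with $(u,i)\in\mathcal{S}_1$, $(u,j)\in\mathcal{S}_2$ costs at most $|\mathcal{U}|\cdot N_{\max}\cdot|\mathcal{I}|$ events, and pushing the total failure probability below $1/4$ forces $t^2 \asymp \log(|\mathcal{U}|+|\mathcal{I}|)/(N_{\min} d)$. Imposing $t < 1/(2N_{\max})$ then recovers exactly the hypothesis $d > C N_{\max}^3 \log(|\mathcal{U}|+|\mathcal{I}|)/N_{\min}$, with one power of $N_{\max}$ coming from the signal-vs.-noise gap and the other from the $N_{\max}$ choices of positive item $i$ per user in the union bound.

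The main obstacle is the dependence structure among the random inner products: every pair of noise terms that share a common endpoint is correlated, so a blanket appeal to independent subgaussians is invalid at the joint level. I would circumvent this by bounding each noise sum \emph{separately}, conditioning on its shared endpoint $\bm{e}_\ell^{(0)}$ before applying the subgaussian tail; this makes the relevant summands genuinely independent and the tail bound correct, and the subsequent union bound only needs the number of events, not their joint law. A secondary technical point is the use of spherical (rather than Gaussian) concentration, which gives the proper $1/\sqrt{d}$ scale for a single inner product and, through the constant $C'$, fixes the absolute constant $C$ in the final statement.
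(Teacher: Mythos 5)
Your proposal is sound and proves the theorem, but it follows a genuinely different route from the paper. The paper's proof is deterministic given a single random event: it introduces the mutual coherence $\epsilon = M_{\bm{E}^{(0)}}$ of the embedding rows, bounds the noise part of $\bm{e}_u^{(1)T}(\bm{e}_i^{(0)}-\bm{e}_j^{(0)})$ by a triangle inequality ($\le 2N_{\max}\epsilon/\sqrt{N_{\min}}$ inside the bracket, against a signal $\ge 1/\sqrt{N_{\max}}$), so that the ordering holds whenever $\epsilon < \sqrt{N_{\min}/(2N_{\max}^{3})}$, and then invokes a coherence bound for i.i.d. spherical rows ($M_{\bm{E}^{(0)}} \le C\sqrt{\log(|\mathcal{U}|+|\mathcal{I}|)/d}$ with probability $3/4$, Theorem 3.5 of Wright--Ma) to turn this into the dimension condition \eqref{eq:dims}. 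You instead keep the randomness in play: you condition on the shared endpoint, use subgaussianity of spherical inner products, apply Hoeffding to the weighted noise sum so that its scale is $\sqrt{\sum_k \tilde{A}_{u,k}^2/d} \le 1/\sqrt{N_{\min}d}$ rather than the worst-case $\sum_k |\tilde{A}_{u,k}|\,\epsilon$, and union-bound over pairs/triples. This exploits cancellation that the coherence argument throws away, and in fact yields the conclusion already under $d \gtrsim N_{\max}^{2}\log(|\mathcal{U}|+|\mathcal{I}|)/N_{\min}$, which the stated hypothesis \eqref{eq:dims} implies a fortiori; the paper's cruder triangle-inequality bound is what produces the $N_{\max}^{3}$ exactly as stated, at the benefit of a shorter argument resting on one off-the-shelf lemma. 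One bookkeeping correction: your claim that one power of $N_{\max}$ comes from the union bound is not right --- the $N_{\max}$ choices of positive item only enter inside the logarithm (contributing a constant factor, since $N_{\max}\le |\mathcal{U}|+|\mathcal{I}|$), while the threshold $t<1/(2N_{\max})$ contributes $N_{\max}^{2}$ through $t^{2}$; so your route does not ``recover exactly'' \eqref{eq:dims} but rather a weaker sufficient condition, which is harmless here (also take $t$ strictly below $\tfrac{1}{2}N_{\max}^{-1}$, e.g.\ $t=\tfrac{1}{4}N_{\max}^{-1}$, to keep the inequality in \eqref{eq:rnd_bpr} strict).
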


\begin{remark} (Interpretations of Theorem \ref{prop:untrain}) Equation \eqref{eq:rnd_bpr} implies a low BPR loss as the predicted score of any positive pair is larger than that of any negative pair. Due to the smoothing effect of light convolution (linear filters), the final embeddings between interacted pairs are similar even if the initial embeddings are random. In Equation \eqref{eq:dims}, $N_{\max}$ and $N_{\min}$ are adopted for a worst-case analysis, and in practice, we can replace them with the average degree. Equation \eqref{eq:dims} shows that the required embedding dimension of untrained LightGCN grows with the dataset density, which implies untrained LightGCN is more effective on sparse datasets. For the Gaussian initialization adopted in \cite{he2020lightgcn}, the results are similar as high dimensional Gaussian random vectors concentrate around a sphere (refer to Section 3.1 in \cite{vershynin2018high}). The probability $3/4$ can be improved to any probability approaches arbitrarily close to $1$. The high-level reason for untrained LightGCN performing well is that the information contained in the rating matrix and the graph are identical. The BPR loss is adopted for exploiting the information in rating matrix while the low-pass filters are to exploit information in the graph. Thus, a proper use of low-pass filters can accelerate the training or even avoid the training. Interestingly, some recent works also reveal that infinitely wide random CNNs achieve better performance than trained ones \cite{arora2019harnessing}.
\end{remark}

Based on Theorem \ref{prop:untrain}, we argue that the performance of \emph{untrained} LightGCN improves with the embedding dimension and it should be competitive to a trained one when the embedding dimension is sufficiently large. 

To verify this argument, we follow the experiment settings in \cite{he2020lightgcn} and conduct the experiments for a $3$-layer \emph{untrained} LightGCN. The initial embeddings $\bm{E}^{(0)}$ is initialized following an i.i.d. Gaussian distribution $\mathcal{N}(0,0.1)$ as in the original paper. Once the model is initialized, we do not train it but simply compute the user/item embeddings using Equation \eqref{LGCN:final} and then directly test it on the test dataset. We use two sparse datasets, i.e., Gowalla and Amazon-book. The test performance versus the embedding dimension is shown in Fig. \ref{fig:untrained}. As the training/test splitting of two datasets is identical to \cite{he2020lightgcn}, we regard the LightGCN's performance reported in \cite{he2020lightgcn} as the state-of-the-art. We will also compare to LightGCN with large embedding dimensions in Table \ref{tab:time}. The experiments agree with our theory well. As the linear filter is to promote the smoothness, it demonstrates the crucial role of smoothness in CF.

However, the untrained LightGCN is not a practical algorithm for recommendation as the large embedding dimension leads to an expensive memory cost and inference time. Fortunately, the untrained LightGCN with infinitely dimensional embedding has a closed-form solution for predicted scores, as shown in the next theorem. 

\begin{table*}[!htb]
	
	\selectfont  
	\centering
	\caption{Classic methods versus their corresponding graph filters and spatial GCNs.} 
	
	\resizebox{1\textwidth}{!}{
		\begin{tabular}{|c|c|c|c|c|c|c|}  
			\hline  
			Method &   Low-rank Factorization \cite{chen2021scalable} & Linear Auto-encoder \cite{steck2020autoencoders} & Neighborhood-based \cite{aiolli2013efficient} & LGCN-IDE \eqref{LGCN-INF} \\ \hline
			Input Signal $\bar{\bm{r}}_u$  & $\bm{D}^{-\frac{1}{2}}_I\bm{r}_u$ & $\tilde{\bm{r} }_u$ &   $\bm{r}_u$ & $\tilde{\bm{r} }_u$ \\ \hline
			Graph Filter  & $h(\lambda_i) = \bm{1}_{i \leq d}$ & $h(\lambda_i) = \frac{ 1 - \lambda_i}{ 1 + \mu - \lambda_i}$ & $h(\lambda_i) = 1 - \lambda_i$ &  $h(\lambda_i) = \sum_{k=0}^{K-1} \beta_k (1 - \lambda_i)^k$  \\ \hline
			Corresponding 
			Spatial GCN  & \tabincell{c}{Infinite-layer spatial  GCN with\\ convolutional normalization \eqref{eq:mf_1} \eqref{eq:mf_2}} & \tabincell{c}{Infinite-layer spatial GCN \\ with layer combination \eqref{eq:ae}} & Single-layer spatial GCN & \tabincell{c}{Multi-layer spatial GCN \\ with layer combination } \\\hline
	\end{tabular}}
	\label{tab:unify}
\end{table*}

\begin{theorem} \label{thm:inf_lgcn}
Consider an untrained LightGCN with \\$\bm{E}^{(0)} \in \mathbb{R}^{(|\mathcal{I}|+|\mathcal{U}|)\times d}$ following an i.i.d. distribution with zero mean and non-zero variance. As $d \rightarrow \infty$, the predicted score of the untrained LightGCN follows
\begin{align} \label{LGCN-INF}
    \bm{s}_u = \sum_{k=0}^{K-1} \beta_k \tilde{\bm{r}}_u (\tilde{\bm{R}}^T\tilde{\bm{R}})^k.
\end{align}
where $\beta_k$ are constants depending on $[\alpha_k]_{k=0,\cdots,K}$ in \eqref{LGCN:final}.
\end{theorem}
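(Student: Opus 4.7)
The plan is to reduce the statement to a deterministic computation by observing that the predicted scores are quadratic in the random initial embedding $\bm{E}^{(0)}$, and then exploit a law-of-large-numbers concentration as the embedding dimension grows. Concretely, from Equation \eqref{LGCN:final} we can write the final embeddings as $\bm{E} = \bm{F}\,\bm{E}^{(0)}$ with $\bm{F} = \sum_{k=0}^K \alpha_k \tilde{\bm{A}}^k$, a symmetric polynomial in $\tilde{\bm{A}}$. The user/item score matrix is therefore
\begin{equation*}
\bm{S} \;=\; \bm{E}\bm{E}^T \;=\; \bm{F}\,\bigl(\bm{E}^{(0)}\bm{E}^{(0)T}\bigr)\,\bm{F}.
\end{equation*}
Because the entries of $\bm{E}^{(0)}$ are i.i.d.\ with zero mean and variance $\sigma^2$, the $(i,j)$ entry of $\bm{E}^{(0)}\bm{E}^{(0)T}$ is a sum of $d$ i.i.d.\ terms with mean $\sigma^2 \delta_{ij}$; the weak law of large numbers then gives $\tfrac{1}{d}\bm{E}^{(0)}\bm{E}^{(0)T} \xrightarrow{\;p\;} \sigma^2 \bm{I}$ as $d\to\infty$. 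Consequently $\tfrac{1}{d\sigma^{2}}\bm{S} \to \bm{F}^2$ in probability, so up to an overall positive rescaling (which is irrelevant for recommendation ranking but can be absorbed into the coefficients $\beta_k$) the limiting score is the user-item block of $\bm{F}^2$.

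The main algebraic step is then to compute this block using the bipartite structure \eqref{graph:bi}. Because $\tilde{\bm{A}}$ has only off-diagonal blocks $\tilde{\bm{R}},\tilde{\bm{R}}^T$, a direct induction shows
\begin{equation*}
\tilde{\bm{A}}^{2m} \;=\; \begin{bmatrix}(\tilde{\bm{R}}\tilde{\bm{R}}^T)^m & \bm{0}\\ \bm{0} & (\tilde{\bm{R}}^T\tilde{\bm{R}})^m\end{bmatrix},\qquad
\tilde{\bm{A}}^{2m+1} \;=\; \begin{bmatrix}\bm{0} & \tilde{\bm{R}}(\tilde{\bm{R}}^T\tilde{\bm{R}})^m\\ (\tilde{\bm{R}}^T\tilde{\bm{R}})^m\tilde{\bm{R}}^T & \bm{0}\end{bmatrix}.
\end{equation*}
Expanding $\bm{F}^2 = \sum_{k,\ell} \alpha_k \alpha_\ell \tilde{\bm{A}}^{k+\ell}$ and retaining the user-item block (where $k+\ell$ must be odd), grouping terms by $m=(k+\ell-1)/2$ yields
\begin{equation*}
\bigl(\bm{F}^2\bigr)_{UI} \;=\; \sum_{m=0}^{K-1} \Biggl(\sum_{\substack{k+\ell=2m+1\\ 0\le k,\ell\le K}} \alpha_k\alpha_\ell\Biggr)\,\tilde{\bm{R}}\,(\tilde{\bm{R}}^T\tilde{\bm{R}})^{m}.
\end{equation*}
Reading off the $u$-th row and defining $\beta_m := \sigma^2\sum_{k+\ell=2m+1}\alpha_k\alpha_\ell$ (absorbing the overall $d\sigma^{2}$ into the $\beta$'s, or equivalently taking the limit after the trivial normalization) gives exactly \eqref{LGCN-INF}.

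I expect the concentration step to be essentially routine — it is just coordinatewise LLN on a finite matrix of size $(|\mathcal{U}|+|\mathcal{I}|)^2$, and one only needs convergence in probability of a bounded, fixed-size polynomial in $\tfrac{1}{d}\bm{E}^{(0)}\bm{E}^{(0)T}$. The main conceptual obstacle is really the combinatorial bookkeeping in the second step: one must correctly verify the bipartite power formula for $\tilde{\bm{A}}^k$ by induction and then carefully re-index the double sum $\sum_{k,\ell}\alpha_k\alpha_\ell \tilde{\bm{A}}^{k+\ell}$ by the exponent $m=(k+\ell-1)/2$ of $\tilde{\bm{P}}=\tilde{\bm{R}}^T\tilde{\bm{R}}$, so that the resulting coefficients $\beta_m$ depend only on $\{\alpha_k\}$ as claimed. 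Everything else, including the passage from the matrix identity to the per-user vector identity $\bm{s}_u = \sum_{k=0}^{K-1}\beta_k \tilde{\bm{r}}_u(\tilde{\bm{R}}^T\tilde{\bm{R}})^k$, is a direct extraction of the $u$-th row.
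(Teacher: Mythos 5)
Your proposal is correct and follows essentially the same route as the paper's proof: exploit linearity of the final embeddings in $\bm{E}^{(0)}$, let the law of large numbers make the normalized Gram matrix $\tfrac{1}{d}\bm{E}^{(0)}\bm{E}^{(0)T}$ concentrate on a multiple of the identity, and then read off the user--item block of the resulting polynomial in $\tilde{\bm{A}}$, whose odd powers yield the terms $\tilde{\bm{R}}(\tilde{\bm{R}}^T\tilde{\bm{R}})^m$. The only difference is organizational --- the paper splits into user/item embedding blocks $\bm{U}^{(k)},\bm{V}^{(k)}$ before forming $\bm{U}\bm{V}^T$, whereas you square $\bm{F}=\sum_k \alpha_k\tilde{\bm{A}}^k$ directly --- and your explicit $1/(d\sigma^2)$ normalization before taking the limit is, if anything, slightly more careful than the paper's statement of the same step.
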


\begin{remark} (Interpretations of $\tilde{\bm{R}}^T\tilde{\bm{R}}$) As shown in Theorem \ref{thm:inf_lgcn}, the gram matrix $\tilde{\bm{R}}^T\tilde{\bm{R}}$ plays a pivotal role. For sparse binary data, $(\bm{D}_I^{-\frac{1}{2}}\bm{R}^T\bm{R}\bm{D}_I^{-\frac{1}{2}})_{ij}$ defines the cosine similarity between item $i$ and item $j$ \cite{aiolli2013efficient}. Likewise, $(\tilde{\bm{R}}^T\tilde{\bm{R}})_{ij}$ provides a similarity measure between item $i$ and item $j$. Directly using the gram matrix as the item-item similarity results in the neighborhood-based method \cite{aiolli2013efficient}, which was the winner of Millions of Song Competition\footnote{The competition is at https://www.kaggle.com/c/msdchallenge}. The similarity between the neighborhood-based method and LightGCN is not surprising as LightGCN is based on the neighborhood propagation. As LightGCN consists of multi-hop propagation, the term $\tilde{\bm{R}}^T\tilde{\bm{R}}$ appears as polynomials. From the graph signal processing perspective, it is a linear filter, which is low-pass.
\end{remark}

We call \eqref{LGCN-INF} as LightGCN with Infinitely Dimensional Embedding (LGCN-IDE). The performance of LGCN-IDE is shown in Table \ref{tab:overall}. Remarkably, we see that on Amazon-book dataset, it outperforms the performance of LightGCN reported in \cite{he2020lightgcn} by more than $40\%$ under exactly the same training/test data splits.
 
\section{A Unified Framework}
\label{sec:framework}
In this section, we first extend LGCN-IDE to incorporate general low-pass filters, which form a unified framework. Then we prove that this framework unifies the neighborhood-based approaches, low-rank matrix factorization, linear auto-encoders, and linear graph convolutional networks, where different methods correspond to different low pass-filters. Finally, we present a simple yet effective algorithm for CF. 

\subsection{A Unified Graph Low-pass Filter Based Framework} 
In this subsection, we extend \eqref{LGCN-INF} to incorporate general graph filters. To simplify the notations, we denote $\tilde{\bm{P}} = \tilde{\bm{R} }^T\tilde{\bm{R} }$ in the remaining of the article. Note that $\tilde{\bm{P}}$ can also be seen as a normalized adjacency matrix for an item-to-item graph, whose eigenvalues are between $0$ and $1$. 

\begin{theorem} \label{thm:P}
    Let $\lambda_1 \geq \cdots \geq \lambda_{|\mathcal{I}|}$ be the eigenvalues of $\tilde{\bm{P}}$, then
    \begin{align*}
        0 \leq \lambda_{|\mathcal{I}|} \leq \cdots \leq \lambda_1 \leq 1.
    \end{align*}
\end{theorem}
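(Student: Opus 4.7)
The plan is to bound the spectrum of $\tilde{\bm{P}}$ by relating it to the symmetric normalized adjacency matrix $\tilde{\bm{A}}$ of the bipartite user-item graph, whose spectral range $[-1,1]$ is a classical fact from spectral graph theory.

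First, for the lower bound, I would simply observe that $\tilde{\bm{P}} = \tilde{\bm{R}}^T \tilde{\bm{R}}$ is a Gram matrix, hence positive semidefinite, so every eigenvalue satisfies $\lambda \geq 0$. This immediately gives $\lambda_{|\mathcal{I}|} \geq 0$.

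For the upper bound, the key step is to connect $\tilde{\bm{P}}$ to $\tilde{\bm{A}}$ via block multiplication. From the definition of $\tilde{\bm{A}}$ in the excerpt,
\begin{equation*}
\tilde{\bm{A}}^2 = \begin{bmatrix} \bm{0} & \tilde{\bm{R}} \\ \tilde{\bm{R}}^T & \bm{0} \end{bmatrix}^2 = \begin{bmatrix} \tilde{\bm{R}}\tilde{\bm{R}}^T & \bm{0} \\ \bm{0} & \tilde{\bm{R}}^T\tilde{\bm{R}} \end{bmatrix} = \begin{bmatrix} \tilde{\bm{R}}\tilde{\bm{R}}^T & \bm{0} \\ \bm{0} & \tilde{\bm{P}} \end{bmatrix},
\end{equation*}
so the spectrum of $\tilde{\bm{P}}$ is contained in the spectrum of $\tilde{\bm{A}}^2$, i.e., in $\{\mu^2 : \mu \in \mathrm{spec}(\tilde{\bm{A}})\}$. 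It therefore suffices to show $\|\tilde{\bm{A}}\|_2 \leq 1$.

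Next I would verify that $\tilde{\bm{A}}$ coincides with the symmetric normalized adjacency $\bm{D}^{-1/2} \bm{A} \bm{D}^{-1/2}$ of the bipartite graph, where $\bm{D} = \mathrm{diag}(\bm{D}_U, \bm{D}_I)$; this is a short block-matrix computation using the definition of $\tilde{\bm{R}}$. Once that identification is in place, I can invoke the standard fact that the normalized Laplacian $\tilde{\bm{L}} = \bm{I} - \tilde{\bm{A}}$ is PSD with eigenvalues in $[0,2]$, hence the eigenvalues of $\tilde{\bm{A}}$ lie in $[-1,1]$. For completeness one may give a direct argument: for any $\bm{y} = \bm{D}^{1/2}\bm{x}$, a short manipulation yields $\bm{y}^T(\bm{I}\pm \tilde{\bm{A}})\bm{y} = \sum_{(p,q)\in\mathcal{E}} (x_p \pm x_q)^2 \geq 0$, so $\tilde{\bm{A}} \preceq \bm{I}$ and $-\tilde{\bm{A}} \preceq \bm{I}$.

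Combining these steps, every eigenvalue of $\tilde{\bm{A}}^2$ is at most $1$, so every eigenvalue of $\tilde{\bm{P}}$ is in $[0,1]$, proving the claim. There is no real obstacle here; the only thing to be careful about is writing out the block identification of $\tilde{\bm{A}}$ with the normalized adjacency cleanly so the invocation of the $[-1,1]$ spectral bound is unambiguous.
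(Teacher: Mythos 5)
Your proposal is correct and takes essentially the same route as the paper's own proof: square the normalized bipartite adjacency $\tilde{\bm{A}}$ so that $\tilde{\bm{P}}=\tilde{\bm{R}}^T\tilde{\bm{R}}$ appears as a diagonal block, bound the spectrum of $\tilde{\bm{A}}$ by $[-1,1]$ via the normalized-Laplacian quadratic form, and obtain the lower bound from positive semidefiniteness of the Gram matrix. If anything, your explicit check that both $\bm{I}-\tilde{\bm{A}}$ and $\bm{I}+\tilde{\bm{A}}$ are positive semidefinite is slightly more careful than the paper's lemma, which writes out only the $\bm{I}-\tilde{\bm{A}}$ quadratic form before asserting the two-sided bound.
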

The graph Laplacian of the item-to-item graph is defined as $\tilde{\bm{L}} = \bm{I} - \tilde{\bm{P}}$. In this way, we can apply graph signal processing to the item-to-item graph. Next we elaborate our unified framework, which is an extension of \eqref{LGCN-INF} with general graph filters. We consider an input graph signal $\bar{\bm{r} }_u$, which is some transformation of the users' observed ratings $\bm{r}_u$. Then a low-pass filter is applied to the graph signal to obtain a filtered signal. Finally, we may scale the obtained graph signal to get the final prediction scores. Denoting the eigendecomposition by $\tilde{\bm{L}} = \bm{U}\bm{\Lambda}\bm{U}^T$, the framework is given by 
\begin{align} \label{eq:framework}
    \bar{\bm{s}}_u = \bar{\bm{r}}_u \bm{U} \text{Diag}(h(\lambda_1), \cdots, h(\lambda_n)) \bm{U}^T,
\end{align}
where $\bar{\bm{s}}_u$ is the filtered predicted score, and $h(\cdot)$ is a low-pass filter. From the graph signal processing perspective, it is a graph convolution, i.e., a graph signal $\bm{r}_u \in \mathbb{R}^{|\mathcal{I}|}$ convolving with a low-pass filter $h(\cdot)$.

\subsection{Interpreting Classic Methods from Graph Signal Processing Perspective} \label{sec:related}
Interestingly, some classic works for recommendation can be interpreted as graph signal processing approaches, where the low-pass filter plays an essential role. The classic methods typically involves auto-encoder-based \cite{ning2011slim,liang2018variational,steck2020autoencoders}, matrix factorization-based \cite{rendle2012bpr,chen2021scalable}, and GCN-based ones \cite{wang2019neural,he2020lightgcn,zhang2021cope}. In this subsection, we will provide a unified view of the linear methods from the graph signal processing perspective. As the spectral convolution can be transformed into a spatial convolution in GCNs by first-order approximation \cite{kipf2016semi}, it is interesting to investigate what kind of GCNs will these classic methods induce. These GCNs induced by classic algorithms can also be seen as white-box neural networks \cite{chan2021redunet}. A test of performance for these GCNs is left for future works.

\subsubsection{Low-rank Matrix Factorization} Low-rank matrix factorization is one of the most classic algorithms for CF. Note that GFT is also a matrix factorization where the low-frequency signal components correspond to the principle components of the rating matrix. This observation allows us to connect MF and graph-based methods. We take the objective function in a recent work \cite{chen2021scalable} as an example. Denote $d$ as the embedding dimension, the model is given by
\begin{align} \label{eq:SGMC}
    \bm{U}^*, \bm{V}^* = \underset{\bm{U} \in \mathbb{R}^{|\mathcal{U}| \times d},\bm{V} \in \mathbb{R}^{|\mathcal{I}| \times d} }{\text{argmin} } \|\tilde{\bm{R}} - \bm{U}\bm{V}^T\|_F^2 \quad \text{s.t. } \bm{V}^T \bm{V} = \bm{I}.  
\end{align}
As shown in \cite{chen2021scalable}, $\bm{V}^*$ contains the smallest $K$ eigenvectors of $\tilde{\bm{L}}$ and $\bm{U}^* = \bm{R}\bm{D}_I^{\frac{1}{2}}\bm{V}$. Viewing the eigendecomposition as GFT, it can be interpreted as an ideal low-pass filter \eqref{fil:low}
\begin{align*}
    h(\lambda_i) = \bm{1}_{i \leq d}.
\end{align*}

We then turn low-rank matrix factorization into a spatial convolution fashion. This is more difficult than the conversion in GCN \cite{kipf2016semi} due to the orthogonal constraint and non-convexity of problem \eqref{eq:SGMC}. Observing that the optimal solution $\bm{V}^*$ to \eqref{eq:SGMC} is also the optimal solution to the following problem 
\begin{align} \label{eq:power_sgmc}
    \bm{V}^* = \underset{\bm{X} \in St(|\mathcal{I}|, d)}{\text{argmax}}  \| \tilde{\bm{R}}\bm{X}\|_F^2.
\end{align}
We can rewrite \eqref{eq:power_sgmc} as spatial convolution by first-order expansion like GCNs \cite{kipf2016semi}. We begin with a random $\bm{E}^{(0)} \in \mathbb{R}^{|\mathcal{I}|\times d}$, and the update rule is given by 
\begin{align}
    & \hat{\bm{E}}^{(k)} = \at{\left(\nabla_{\bm{X}} \| \tilde{\bm{R}}\bm{X}\|_F^2\right)}{\bm{X} = \bm{E}^{(k-1)}} = \tilde{\bm{P}} \bm{E}^{(k-1)}, \label{eq:mf_1}\\
    &\bm{E}^{(i)} = \underset{\bm{S} \in St(|\mathcal{I}|,d)}{\text{argmax} } \langle \bm{S}, \hat{\bm{E}}^{(k)}  \rangle \overset{(a)}{=}  \hat{\bm{E}}^{(k)} \left(\hat{\bm{E}}^{(k)T}\hat{\bm{E}}^{(k)}  \right)^{-\frac{1}{2}} \label{eq:mf_2},
\end{align}
where (a) follows Proposition 7 in \cite{journee2010generalized}. The final embeddings are given by 
\begin{align*}
    \bm{V}^* = \bm{E} = \bm{E}^{(\infty)}.
\end{align*}
Note that \eqref{eq:mf_1} is a spatial graph convolution, and \eqref{eq:mf_2} is coincidentally equivalent to convolutional normalization for CNNs \cite{liu2021convolutional} (refer to (6)-(8) in \cite{liu2021convolutional}). The convolutional normalization was proposed to accelerate the training of convolutional networks and improve robustness. From this view, the low-rank matrix factorization is equivalent to an infinite layer GCN with convolutional normalization. As the number of layers is large, it suffers from the over-smoothing issue \cite{li2018deeper}, which hurts the performance.

\subsubsection{Linear Auto-encoders} In the linear auto-encoders, e.g., EASE \cite{steck2019embarrassingly} and SLIM \cite{ning2011slim}, the predicted score vector of a user ($\bar{\bm{s}}_u$) is obtained by the dot product
\begin{align*}
    \bar{\bm{s}}_{u} = \bar{\bm{r}}_u \bm{B},
\end{align*}
where $\bm{B} \in \mathbb{R}^{|\mathcal{I}| \times |\mathcal{I}|}$ is a learnable weight matrix. The training objective is some regularized or constrained version of $\min_{\bm{B}} \ \sum_u \|\bar{\bm{r}}_u - \bar{\bm{r}}_u \bm{B}\|_2^2$ . From the graph signal processing view, it can be interpreted as a graph signal $\bar{\bm{r}}_u$ convolving with a filter $\bm{B}$, and $\bar{\bm{r}}_u = \bar{\bm{r}}_u \bm{B}$ is a steady state. This defines a graph diffusion on the corresponding graph like opinion dynamics \eqref{fil:opi}. Next, we show the equivalence between a specific version of linear auto-encoders and the graph diffusion filter.

As shown in \cite{steck2020autoencoders}, the following linear auto-encoder is able to achieve competitive performance compared with the deep ones \cite{wu2016collaborative,liang2018variational}. Specifically, we consider the following formulation in \cite{steck2020autoencoders} for simplicity:
\begin{align} \label{opt:DLAE_full}
    \underset{\bm{B}}{\text{minimize}}\quad \|\tilde{\bm{R}} - \tilde{\bm{R}} \bm{B}\|_F^2 + \mu \|\bm{B}\|_F^2.
\end{align}
As \eqref{opt:DLAE_full} is a ridge regression, we can write down the closed-form solution as 
\begin{align} \label{eq:DLAE}
    \bm{B}^* = (\tilde{\bm{P}} + \mu \bm{I})^{-1} \tilde{\bm{P}}.
\end{align}
Viewing the eigenvalue decomposition as GFT, the graph filter in \eqref{eq:DLAE} is given by
\begin{align} \label{fil:DLAE}
    h(\lambda_i) = \frac{ 1 - \lambda_i}{ 1 + \mu - \lambda_i}.
\end{align}
To understand \eqref{fil:DLAE} in the content of low-pass filters, the low-pass ratio $\eta_k$ in \eqref{eq:low-pass} is given by
\begin{align*}
    \eta_k = \frac{\frac{1 - \lambda_{k+1} }{1 + \mu - \lambda_{k+1}}}{\frac{1 - \lambda_k}{1 + \mu - \lambda_k}} &= \frac{(1 - \lambda_{k+1})(1 + \mu - \lambda_k)}{(1 - \lambda_k)(1 + \mu - \lambda_{k+1})} & \\
    &= 1 - \frac{\mu(\lambda_{k+1} - \lambda_k)}{(1 - \lambda_k)(1 + \mu - \lambda_{k+1})} < 1.
\end{align*}
The convolutional filter in \eqref{fil:DLAE} is similar to opinion dynamics and is a kind of graph diffusion filter. Like other diffusion-based methods, the memory cost of linear auto-encoder is high as we need to store matrix $\bm{B}$ in \eqref{opt:DLAE_full}.

In the literature, the Neumann series are often adopted to convert the graph diffusion into a spatial convolution \cite{klicpera2018predict,xhonneux2020continuous}. Similarly, we can use it to interpret \eqref{eq:DLAE} as spatial GCNs. For $\mu > 1$, \eqref{eq:DLAE} can be written as 
\begin{align}\label{eq:ae}
    (\tilde{\bm{P}} + \mu \bm{I})^{-1} \tilde{\bm{P}} =  \frac{1}{\mu} \left(\sum_{k=0}^{\infty} (-\mu^{-1} \tilde{\bm{P}})^k \right) \tilde{\bm{P}}.
\end{align}
From this view, the initial embedding is an identity matrix $\bm{E}^{(0)} = \bm{I} \in \mathbb{R}^{|\mathcal{I}| \times |\mathcal{I}|}$, the update of corresponding spatial convolution is given by
\begin{align*}
    \bm{E}^{(k)} = \tilde{\bm{P}} \bm{E}^{(k-1)},
\end{align*}
and the final embeddings can be obtained as
\begin{align*}
    \bm{E} = \sum_{k=1}^{\infty} - \left(-\frac{1}{\mu} \right)^k \bm{E}^{(k)}.
\end{align*}
The layer combination appears naturally and the coefficients decrease quickly. As discussed in \cite{he2020lightgcn}, the layer combination is the key to alleviate the over-smoothing issue and improve performance.

\subsubsection{Neighborhood-based Approaches} The neighborhood-based approaches are often considered as exploiting first-order graph information in the literature discussions \cite{he2020lightgcn}. We consider the following formulation, which utilizes the gram matrix as the similarity matrix \cite{aiolli2013efficient}, i.e., $ \bar{\bm{s}}_u = \bm{r}_u \tilde{\bm{P}}$. Obviously, the corresponding filter is a first-order linear filter
\begin{align*}
    h(\lambda_i) = 1 - \lambda_i.
\end{align*}
and the corresponding spatial GCN is a one-layer GCN. This approach is simple and scalable. However, it lacks higher-order information on the graph. 

\subsubsection{LGCN-IDE} For completeness, we analyze LGCN-IDE \eqref{LGCN-INF}. By eigendecomposition, the corresponding filter takes the form of 
\begin{align*}
    h(\lambda_i) = \sum_{k=0}^{K-1} \beta_k (1 - \lambda_i)^k.
\end{align*}
Since it is still a LightGCN, LGCN-IDE naturally corresponds to multi-layer spatial GCN with a layer combination.

\subsection{A Simple yet Effective Baseline Algorithm}
In this subsection, we develop a simple yet effective baseline algorithm, whose training is as efficient as the inference of LightGCN with a big-O notation. We first analyze the inference computational complexity of LightGCN. We denote the number of non-zero elements in $\bm{R}$ as $\eta$. For a LightGCN with $d$-dimensional embedding, the inference time is $\mathcal{O}(\eta d)$.

The general graph filters require eigendecomposition and thus are not efficient for large-scale recommendation \cite{he2020lightgcn}. Fortunately, there are some graph filters that enjoy a  high computational efficiency, i.e., linear filters and ideal low-pass filters. In order to obtain linear filters, only the normalization is required during training, and thus the training complexity is $\mathcal{O}(\eta)$. A major drawback of the linear filters is that they can hardly obtain a high-order information of the graph. 

For the ideal low-pass filter, only the top-K eigenvectors of $\tilde{\bm{P}}$ are required. Nevertheless, a direct computation for the  top-K eigenvector of $\tilde{\bm{P}}$ is far from computation and memory efficient because $\tilde{\bm{P}}$ is not as sparse as $\bm{R}$. By using the equivalent formulation in \eqref{eq:power_sgmc}, the largest eigenvectors can be computed by \eqref{eq:mf_1}, \eqref{eq:mf_2}, and this iterative algorithm is called the generalized power method (GPM) in the optimization literature \cite{journee2010generalized}. In GPM, we only need to store $\bm{R}$ instead of $\tilde{\bm{P}}$, and the computational complexity is $\mathcal{O}((d\eta + d^3)\log(1/\epsilon))$ where $\epsilon$ is the desired accuracy for the eigenvectors. This algorithm is efficient as long as $d^2 < \eta$. As discussed before, the ideal low-pass filter is equivalent to an infinite layer GCN without layer combination and it suffers from over-smoothing, which means that it lacks a low-order information in the graph. 

As a result, we argue that combining the linear filter and ideal low-pass filter will result in a strong baseline. Specifically, our proposed algorithm, named as Graph
Filter based Collaborative Filtering (GF-CF), has the following form
\begin{align} \label{eq:GF-CF}
    \bm{s}_u = \bm{r}_u\left( \tilde{\bm{R}}^T \tilde{\bm{R}} +  \alpha\bm{D}_I^{-\frac{1}{2}}\bar{\bm{U}}\bar{\bm{U}}^T \bm{D}_I^{\frac{1}{2}} \right),
\end{align}
where $\bm{s}_u$ and $\bm{r}_u$ denote predicted and observed scores, respectively. Likewise, $\bar{\bm{U}}$ is the top-K singular vectors of $\tilde{\bm{R}}$, and $\alpha$ is the tuned parameter. We acknowledge that learning $\alpha$ or transforming \eqref{eq:GF-CF} into GCNs may lead to better performance. Nevertheless, we will demonstrate that \eqref{eq:GF-CF} already achieves the state-of-the-art performance. 

\section{Experiments} \label{sec:exp}
In this section, we first describe the experimental settings, which exactly follow \cite{he2020lightgcn}. Next, we compare our method with the state-of-the-art deep learning methods.
\subsection{Experimental Settings}
To keep the comparison fair, we use the same datasets, the same train/test splitting, and the identical evaluation metric as in \cite{he2020lightgcn}. The statistics of the datasets are listed in Table \ref{tab:dataset}. The evaluation metrics are recall@20 and ndcg@20. 

\begin{table}[t]
\caption{Statistics of the experimented data.}
\label{tab:dataset}
\resizebox{0.48\textwidth}{!}{
\begin{tabular}{l|r|r|r|r}
\hline
\multicolumn{1}{c|}{Dataset} & \multicolumn{1}{c|}{\# User} & \multicolumn{1}{c|}{\# Item} & \multicolumn{1}{c|}{\# Interaction} & \multicolumn{1}{c}{Density} \\ \hline\hline
Gowalla & $29,858$ & $40,981$ & $1,027,370$ & $0.00084$ \\ \hline
Yelp2018 & $31,668$ & $38,048$ & $1,561,406$ & $0.00130$ \\ \hline
Amazon-book & $52,643$ & $91,599$ & $2,984,108$ & $0.00062$ \\ \hline
\end{tabular}}
\end{table}

\subsubsection{Benchmarks}
We follow \cite{he2020lightgcn} to set up the benchmarks.  
\begin{enumerate}
    \item LightGCN \cite{he2020lightgcn}: LightGCN is the state-of-the-art method for CF. Please refer to Section \ref{sec:lightgcn} for a detailed description. 
    \item NGCF \cite{wang2019neural}: NGCF is a nonlinear deep GCN-based method. Besides the components in LightGCN, it contains of feature transformation, and nonlinear activation. 
    \item GRMF and GRMF-norm\cite{rao2015collaborative,he2020lightgcn}: GRMF adds a graph Laplacian regularizer to the training objective of BPR loss in matrix factorization. In GRMF-norm, the normalized Laplacian is adopted instead of the graph Laplacian.
    \item Mult-VAE \cite{liang2018variational}: This is a variational autoencoder based method. The data is assumed to be generated by a multinomial distribution and variational inference is adopted to estimate the parameters. 
\end{enumerate}
In \cite{wang2019neural}, it has been shown that NGCF outperforms GC-MC \cite{berg2017graph}, Pinsage \cite{ying2018graph}, NeuMF \cite{he2017neural}, CMN \cite{ebesu2018collaborative}, MF \cite{rendle2012bpr}, HOP-Rec \cite{yang2018hop} on the same train/test splitting. Thus, we will not include these methods as benchmarks. We also do not compare with full rank models \cite{ning2011slim,steck2019markov} due to the out of memory on Amazon-book dataset. The hyperparameter settings are identical to \cite{he2020lightgcn}.

For the proposed graph filter based methods, we focus on the following two variants: 
\begin{enumerate}
    \item GF-CF: The proposed simple baseline method for CF in \eqref{eq:GF-CF}.
    \item LGCN-IDE: The untrained LightGCN with infinitely dimensional embedding. The closed-form is given in \eqref{LGCN-INF}. 
\end{enumerate}
For the implementation of graph filters, we adopt Scipy \cite{virtanen2020scipy} for sparse operation.

\begin{table}[t!] 
\caption{The comparison of overall performance among GF-CF and competing methods. The performance of benchmarks is reproduced from \cite{he2020lightgcn}.}
\label{tab:overall}
\resizebox{0.48\textwidth}{!}{
\begin{tabular}{l|c c|c c|c c}
\hline
\textbf{Dataset} & \multicolumn{2}{c|}{\textbf{Gowalla}} & \multicolumn{2}{c|}{\textbf{Yelp2018}} & \multicolumn{2}{c}{\textbf{Amazon-book}} \\ \hline
\textbf{Method} & \textbf{recall} & \textbf{ndcg} & \textbf{recall} & \textbf{ndcg} & \textbf{recall} & \textbf{ndcg} \\ \hline\hline
NGCF & 0.1570 & 0.1327 & 0.0579 & 0.0477 & 0.0344 & 0.0263 \\ 
Mult-VAE   & 0.1641	& 0.1335 & 0.0584 & 0.0450 &  0.0407 & 0.0315\\ 
GRMF & 0.1477 & 0.1205 & 0.0571 & 0.0462 & 0.0354 &
0.0270\\ 
GRMF-norm & 0.1557 & 0.1261 & 0.0561 & 0.0454
 & 0.0352 & 0.0269\\
LightGCN & 0.1830 & \textbf{0.1554} & 0.0649 & 0.0530 & 0.0411 & 0.0315 \\\hline
LGCN-IDE & 0.1682 & 0.1347 & 0.0609 & 0.0505 & 0.0612 & 0.0514  \\
GF-CF & \textbf{0.1849} & 0.1518 & \textbf{0.0697} & \textbf{0.0571} & \textbf{0.0710} & \textbf{0.0584} \\\hline
\end{tabular}
}
\end{table}

\begin{table*}[h]
\caption{The comparison of performance and training time of GF-CF and LightGCN.}
\label{tab:time}
\resizebox{0.9\textwidth}{!}{
\begin{tabular}{l|c c c|c c c|c c c}
\hline
\textbf{Dataset} & \multicolumn{3}{c|}{\textbf{Gowalla}} & \multicolumn{3}{c|}{\textbf{Yelp2018}} & \multicolumn{3}{c}{\textbf{Amazon-book}} \\ \hline
\textbf{Method} & \textbf{recall} & \textbf{ndcg} & \textbf{training time} &  \textbf{recall} & \textbf{ndcg} & \textbf{training time}  & \textbf{recall} & \textbf{ndcg} & \textbf{training time} \\ \hline\hline
LightGCN-64 & 0.1830 & 0.1554 & $2.77 \times 10^4$s & 0.0649 & 0.0530 & $5.15 \times 10^{4}$s & 0.0411 & 0.0315 &  $1.27 \times 10^{5}$s \\ 
LightGCN-128   & 0.1878	& 0.1591 & $3.31 \times 10^4$s & 0.0671 &  0.0550 & $5.66 \times 10^{4}$s & 0.0459 & 0.0353 & $1.81 \times 10^{5}$s\\ 
LightGCN-256 & 0.1893 & 0.1606& $4.54 \times 10^{4}$s & 0.0689 & 0.0568 &$8.09 \times 10^{4}$s & 0.0481 & 0.0371 & $2.98 \times 10^5$s \\
LightGCN-512 & 0.1892 & 0.1604 & $7.28 \times 10^{4}$s & 0.0689 & 0.0569 &$1.33 \times 10^{5}$s & 0.0485  & 0.0375 & $5.26 \times 10^5$s \\\hline
GF-CF & 0.1849 & 0.1518 & 30.5s & 0.0697 & 0.0571 & 46.0s & 0.0710 & 0.0584 & 65.8s \\\hline
\end{tabular}
}
\end{table*}

\subsection{Performance Comparison}
The performance of the proposed methods and other benchmarks are shown in Table \ref{tab:overall}. Despite the simplicity, GF-CF achieves competitive or better performance than deep learning-based methods.

\subsubsection{LGCN-IDE versus LightGCN} LGCN-IDE is an untrained LightGCN with an infinitely dimensional embedding. On Gowalla and Yelp2018, which are of small sizes, LightGCN outperforms LGCN-IDE. However, LGCN-IDE outperforms LightGCN by a large margin on the large-scale dataset, i.e., the Amazon-book dataset. In LightGCN, the known scores are compressed into limited dimensional vectors, which restricts the expressiveness. In contrast, in LGCN-IDE, the ratings are directly used as the graph signal without compression. Additionally, LightGCN is trained with a stochastic gradient descent (SGD) while LGCN-IDE has a closed-form solution. As the size of the dataset increases, the optimization by SGD becomes more difficult. We suspect that these two reasons contribute to the large performance gain of LGCN-IDE over LightGCN in the Amazon-book dataset.

\subsubsection{Graph filters versus deep learning-based methods} In Table \ref{tab:overall}, the simple graph filter achieves competitive or better performance compared with deep learning-based methods. LightGCN also outperforms NGCF by removing the non-linear transformations. From the universal approximation theory \cite{hornik1989multilayer}, deep neural networks can approximate linear functions easily. Nevertheless, linear functions are non-trivial to learn for a neural network trained with SGD. A recent theoretical study demonstrates that it is impossible for neural networks with tanh, cosine, or quadratic activation to extrapolate the linear functions well \cite{xu2020neural}. With ReLU activation, A neural network can extrapolate linear functions well if the training data cover all directions (e.g., a hypercube covering the origin) \cite{xu2020neural}, which is not trivial to satisfy in practice. This theoretical result suggests that learning linear functions is a non-trivial task. In addition, deep neural networks do well in extracting complicated features, but CF with implicit feedback is in lack of rich features. Owing to these two factors, the linear models are able to outperform deep models in CF with implicit feedback.

\subsection{Comparison with LightGCN of Large Embedding Dimension} \label{sec:time}
In this subsection, we compare GF-CF with LightGCNs of different embedding dimensions. For the untrained LightGCN, the performance improves significantly with the dimension as shown in Fig. \ref{fig:untrained}. The natural questions are 1) does the performance of trained LightGCN increase significantly as the dimension grows; 2) how does GF-CF perform compared with LightGCN with large embedding dimensions. We validate these questions empirically in Table \ref{tab:time}. The experiments in this subsection are conducted on a server with an Intel Xeon(R) CPU E5-2698 v4 @ 2.20GHz and a Tesla V100 GPU. For the implementation of LightGCN, we download the source code from https://github.com/gusye1234/LightGCN-PyTorch and train $1000$ epochs as the original paper\footnote{We notice that training LightGCN for $400-600$ instead of $1000$ epochs only introduce a slight performance loss, which reduces the training time of LightGCN, but this does not affect our conclusion as we have more than three magnitudes of speedups.}. Due to the excessive training cost, we do not train LightGCN with an embedding dimension of more than $512$. As shown in Table \ref{tab:time}, GF-CF still achieves competitive or higher performance than LightGCN with large embedding dimensions. As the embedding dimension grows, the performance improvement of LightGCN becomes marginal, which is similar to matrix factorization and neural collaborative filtering \cite{rendle2020neural}. The overall training time of GF-CF is even smaller than $1$ training epoch consumed by LightGCN. It demonstrates that GF-CF is a simple but hard-to-beat baseline method for CF.

\section{Related Works}
\label{sec:literature}
\subsection{Collaborative Filtering Methods}
Collaborative filtering (CF) plays a fundamental role in modern recommender systems \cite{covington2016deep}. One popular paradigm is the model-based CF methods. In such methods, the users and items are parameterized by (low-dimensional) vectors and the interactions are reconstructed based on the embeddings and model weights. The classic matrix factorization (MF) maps the ID of users and items as embedding vectors and uses the dot product between embedding vectors as predicted scores. The dot product model can be further improved by using neural networks \cite{he2017neural,tay2018latent}. Another classic model-based CF is to reconstruct the score for an item by a transformation of the scores for other items, from linear auto-encoders (e.g., SLIM \cite{ning2011slim}) to deep auto-encoders (e.g., Multi-VAE \cite{liang2018variational}). Another paradigm is graph-based CF methods. The early works (e.g., Item-rank \cite{gori2007itemrank} and Bi-rank \cite{he2016birank}) exploit the label propagation on graph and belong to the neighborhood-based methods. These methods are often considered as heuristics and inferior to model-based methods due to the lack of training. Recent works address this issue by developing GCN-based methods and train GCNs in an end-to-end manner, e.g., GC-MC \cite{berg2017graph}, NGCF \cite{wang2019neural}, and LightGCN \cite{he2020lightgcn}. 

Notice that the information contained in the sparse rating matrix or graph formulation are identical and GFT is a matrix factorization. In this paper, we unify the two paradigms from the graph signal processing view and identify that the low-pass filters are the underlying key component in the two paradigms. In addition, we show that different paradigms correspond to different low-pass filters and these filters can be incorporated together to improve the performance.

\subsection{Spectral and Spatial GCNs} The spectral GCNs are developed from graph signal processing with learned graph filters, which enjoy theoretical guarantees from graph signal processing theory \cite{ruiz2021graph}. Nevertheless, GFT requires full eigendecomposition, which induces prohibitive computation for large-scale graphs. The spectral CF \cite{zheng2018spectral} and LCF \cite{yu2020graph} belong to this category and thus they cannot be applied on large-scale datasets. To speed up the computation, the spatial GCNs based on 1-hop neighbor propagation were proposed \cite{xu2018powerful}. In each layer of spatial GCNs, only neighborhood aggregations are required, and thus the computational cost is extensively reduced. In the context of recommendation, spatial GCNs contain GCMC \cite{berg2017graph}, NGCF \cite{wang2019neural}, LightGCN \cite{he2020lightgcn}, and PinSage \cite{ying2018graph}. A unique advantage of these methods is the scalability, meaning that they can be applied to large-scale sparse datasets. A recent theoretical study unified the spectral and spatial GCNs and demonstrates that they are all low-pass filters \cite{balcilar2021analyzing}. In the paper, we also unify the classic CF methods via low-pass filtering, which explains the success of GCNs in CF.

\section{Conclusions}
\label{sec:con}
In this paper, we identified the importance of smoothness in the embeddings in a successful recommendation both theoretically and empirically, which bridges CF and graph signal processing theory. Via the lens of graph signal processing, we showed that the neighborhood-based methods, low-rank matrix completion, and linear auto-encoders are all graph convolution with low-pass filters. This further validated the power of graph convolution for recommendation. In addition to our theoretical analysis, we also developed a simple but hard-to-beat baseline algorithm, GF-CF. It was demonstrated that GF-CF achieves competitive or better performance than deep learning-based methods. We believe that the insights of this investigation are inspirational to the principled GCN architecture design for recommender systems. In the future, we will implement the GCNs induced by classic algorithms in Table \ref{tab:unify} and exploit additional information, e.g., social networks and knowledge graphs.

\section{Proofs}

\subsection{Proof of Theorem \ref{prop:untrain}}
\begin{proof}
We first prove that \eqref{eq:rnd_bpr} holds when the mutual coherence \cite{donoho2003optimally}) of the embeddings satisfies
\begin{align} \label{eq:mi}
    \epsilon = M_{\bm{E}^{(0)}} < \sqrt{\frac{N_{\min}}{2N_{\max}^{3} }},
\end{align}
and then show that as $d > \frac{C N_{\max}^{3} \log(|\mathcal{I}| + |\mathcal{U}|)}{N_{\min}}$, \eqref{eq:mi} holds with probability at least $3/4$.
\begin{align*} 
    &\bm{e}^{(1)T}_i \bm{e}^{(0)}_j - \bm{e}_i^{(1)T} \bm{e}^{(0)}_k  \\
    = &\left( \frac{1}{\sqrt{N_i}}  \sum_{l \in \mathcal{N}_i} \frac{1}{\sqrt{N_l}} \bm{e}_l^{(0)} \right)^T (\bm{e}_j^{(0)} - \bm{e}_k^{(0)}) \\
    \geq &\frac{1}{\sqrt{N_i}}  \left(\frac{1}{\sqrt{N_j} } - \frac{N_i- 1}{\sqrt{N_{\min}}}\epsilon - \frac{N_i}{\sqrt{N_{\min}}}\epsilon  \right) \\
    \geq &\frac{1}{\sqrt{N_i}}\left(\frac{1}{\sqrt{N_{\max} } } - \frac{2N_{\max}}{\sqrt{N_{\min}}}\epsilon\right) \overset{(a)}{>} 0
\end{align*}
where (a) follows the assumption that $\epsilon < \sqrt{\frac{N_{\min}}{2N_{\max}^{3} }}$. With Lemma \ref{lem:mi}, we see that $\epsilon < \sqrt{\frac{N_{\min}}{2N_{\max}^{3} }}$ as $d > \frac{C N_{\max}^{3} \log(|\mathcal{I}| + |\mathcal{U}|)}{N_{\min}}$.
\end{proof}

\begin{lemma} \label{lem:mi} (Theorem 3.5 in \cite{Wright-Ma-2021})
    Let $\bm{A} \in \mathbb{R}^{n \times m}$ with rows i.i.d. chosen from the uniform distribution on the sphere. Then with probability at least $3/4$,
    \begin{align*}
        M_{\bm{A}} \leq C \sqrt{\frac{\log n}{m}}.
    \end{align*}
    where $C$ is an absolute constant.
\end{lemma}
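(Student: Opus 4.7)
\textbf{Proof proposal for Lemma \ref{lem:mi}.}

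The plan is to reduce the bound on $M_{\bm{A}}$ to a concentration inequality for a single pair of rows and then finish with a union bound. Writing $\bm{a}_1, \dots, \bm{a}_n$ for the rows (each uniformly distributed on $S^{m-1}$), the mutual coherence is $M_{\bm{A}} = \max_{i \neq j} |\langle \bm{a}_i, \bm{a}_j \rangle|$, so it is enough to show that with probability at least $3/4$ no individual pairwise inner product exceeds $C\sqrt{\log n /m}$.

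The first (and main) step is a single-pair tail bound. By rotational invariance of the uniform distribution on the sphere, conditional on $\bm{a}_i$ the inner product $\langle \bm{a}_i, \bm{a}_j \rangle$ has the same distribution as the first coordinate $x_1$ of a uniformly random point $\bm{x}$ on $S^{m-1}$. This coordinate is subgaussian with proxy of order $1/m$; concretely,
\begin{equation*}
\mathbb{P}\bigl(|\langle \bm{a}_i, \bm{a}_j \rangle| > t\bigr) \le 2\exp(-c m t^2)
\end{equation*}
for every $t \in [0,1]$ and an absolute constant $c > 0$. The cleanest derivation uses the Gaussian representation $\bm{x} = \bm{g}/\|\bm{g}\|$ with $\bm{g} \sim \mathcal{N}(\bm{0}, \bm{I}_m)$: the numerator $g_1$ is standard normal, $\|\bm{g}\|$ concentrates sharply around $\sqrt{m}$ by $\chi^2$ concentration, and combining the two ratios gives the claim. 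An equally clean alternative is L\'evy's concentration of measure on $S^{m-1}$: the map $\bm{x} \mapsto \langle \bm{a}_i, \bm{x}\rangle$ is $1$-Lipschitz and has median $0$ by antipodal symmetry, which yields subgaussian tails with variance proxy of order $1/m$.

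The second step is a union bound over the fewer than $n^2$ unordered pairs $\{i,j\}$ with $i \neq j$, using independence of distinct rows:
\begin{equation*}
\mathbb{P}\bigl(M_{\bm{A}} > t\bigr) \le 2 n^2 \exp(-c m t^2).
\end{equation*}
Choosing $t = C\sqrt{\log n/m}$ with $C$ large enough in terms of $c$ (for instance $cC^2 \ge 3$) makes the right-hand side at most $2 n^{-1}$, which is below $1/4$ for all sufficiently large $n$; the finitely many small-$n$ cases can be absorbed into the absolute constant $C$. The main obstacle is really just the single-pair subgaussian estimate; once that is in hand, the union bound is routine and the $3/4$ probability is loose.
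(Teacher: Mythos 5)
Your proof is correct. Note that the paper itself does not prove this lemma at all---it is quoted verbatim as Theorem 3.5 of \cite{Wright-Ma-2021} and used as a black box in the proof of Theorem \ref{prop:untrain}---so there is no in-paper argument to compare against; your route (rotational invariance reducing a single pair to the first coordinate of a uniform point on $S^{m-1}$, a subgaussian tail of order $e^{-cmt^2}$ via the Gaussian representation or L\'evy concentration, then a union bound over the $\binom{n}{2}$ pairs) is exactly the standard argument behind that textbook result. One small tidying remark: you do not need to special-case small $n$; taking $C$ with $cC^2 \geq 4$ gives $\mathbb{P}(M_{\bm{A}} > t) \leq n^{2-cC^2} \leq n^{-2} \leq 1/4$ for every $n \geq 2$ (and $n=1$ is vacuous), and when $C\sqrt{\log n/m} > 1$ the claim is trivial since inner products of unit vectors are bounded by $1$.
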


\subsection{Proof of Theorem \ref{thm:inf_lgcn}}
\begin{proof}
We first separate the embeddings $\bm{E}^{(k)}$ into user embeddings $\bm{U}^{(k)}$ and item embeddings $\bm{V}^{(k)}$, and the individual update is given by
\begin{align*}
    \bm{U}^{(k+1)} = \tilde{\bm{R}} \bm{V}^{(k)}, \quad \bm{V}^{(k+1)} = \tilde{\bm{R}}^T \bm{U}^{(k)}.
\end{align*}

The final embeddings are 
\begin{align*}
    V &=  \left(\alpha_0\bm{V}^{(0)} + \alpha_1 \tilde{\bm{R}}^T\bm{U}^{(0)} + \alpha_2\tilde{\bm{R}}^T\tilde{\bm{R}}\bm{V}^{(0)} + \alpha_3 \tilde{\bm{R}}^T\tilde{\bm{R}}\tilde{\bm{R}}^T\bm{U}^{(0)} + \cdots\right) \\
    &= 
    \left(\sum_{i=0}^{2i \leq K} \alpha_{2i}(\tilde{\bm{R}}^T\tilde{\bm{R}})^{i} \bm{V}^{(0)} + \sum_{i = 0}^{2i+1 \leq K} \alpha_{2i+1}(\tilde{\bm{R}}^T \tilde{\bm{R}})^{i} \tilde{\bm{R}}^T \bm{U}^{(0)} \right)
\end{align*}
and $\bm{U}$ can be computed similarly.

The final prediction of untrained LightGCN with infinitely dimensional embedding is given by
\begin{align*} \label{eq:rec}
    \bm{S} = \bm{U} \bm{V}^T = &\left(\sum_{i=0}^{2i \leq K}\alpha_{2i} (\tilde{\bm{R}}^T\tilde{\bm{R}})^{i} \bm{U}^{(0)} + \sum_{i = 0}^{2i+1 \leq K} \alpha_{2i+1}(\tilde{\bm{R}} \tilde{\bm{R}}^T)^{i} \tilde{\bm{R}} \bm{V}^{(0)} \right)  \\
    &\cdot \left(\sum_{i=0}^{2i \leq K} \alpha_{2i}(\tilde{\bm{R}}^T\tilde{\bm{R}})^{i} \bm{V}^{(0)} + \sum_{i = 0}^{2i+1 \leq K} \alpha_{2i+1}(\tilde{\bm{R}}^T \tilde{\bm{R}})^{i} \tilde{\bm{R}}^T \bm{U}^{(0)} \right)^T \\
\end{align*}

For a pair of matrices $\bm{X}, \bm{Y}$, if the rows of $\bm{X} \in \mathbb{R}^{* \times d}, \bm{Y}^{(0)} \in \mathbb{R}^{* \times d} $ follow independently identical distribution, due to the linearity of dot product, we have $\lim_{d \rightarrow \infty} \bm{X} \bm{Y}^T = \mathbb{E} [\bm{x}_1 \bm{y}_1^T]$, where $\bm{x}_1$ (resp. $\bm{y}_1$) denotes the first column of $\bm{X}$ (resp. $\bm{Y}$).

Thus, as $d \rightarrow \infty$, we have 
\begin{align*}
    \lim_{d \rightarrow \infty} \bm{S} = \mathbb{E}_{\bm{U}^{(0)}, \bm{V}^{(0)}} [\bm{S}] = \sum_{k=0}^{K-1} \beta_k \tilde{\bm{R}} (\tilde{\bm{R}}^T\tilde{\bm{R}})^k.
\end{align*}
where $\beta_k$ depends on $[\alpha_k]_{i = 0,\cdots,K}$.

For a given user $u$, the estimated scores is shown as \\ $\bm{s}_u = \tilde{\bm{r}}_u \sum_{k=0}^{K-1} \beta_k  (\tilde{\bm{R}}^T\tilde{\bm{R}})^k$.

\end{proof}

\subsection{Proof of Theorem \ref{thm:P}}
\begin{proof}
We observe that $\tilde{\bm{A}}^2 = \begin{bmatrix}
    \tilde{\bm{R}}\tilde{\bm{R}}^T & \bm{0} \\
    \bm{0} & \tilde{\bm{R}}^T\tilde{\bm{R}}
    \end{bmatrix}$.
As $\tilde{\bm{A}}^2$ is block diagonal, eigenvalues of $\tilde{\bm{A}}^2$ are a concatenation of eigenvalues of $\tilde{\bm{R}}\tilde{\bm{R}}^T$ and $\tilde{\bm{R}}^T\tilde{\bm{R}}$. For the largest eigenvalue, we have 
\begin{align*}
    \lambda_{\max} (\tilde{\bm{R}}^T\tilde{\bm{R}}) = \lambda_{\max} (\tilde{\bm{R}}\tilde{\bm{R}}^T) = \lambda_{\max}(\tilde{\bm{A}}^2) = \lambda_{\max}(\tilde{\bm{A}})^2 \overset{(a)}{=} 1.
\end{align*}
where (a) follows Lemma \ref{lem:eigs}. As $\tilde{\bm{R}}^T\tilde{\bm{R}}$ is positive semi-definite, $\lambda_{\min}(\tilde{\bm{R}}^T\tilde{\bm{R}}) \geq 0$. This finishes the proof.
\end{proof}

\begin{lemma} \label{lem:eigs}
    Let $\lambda_1 \leq \lambda_2 \leq \cdots \lambda_{|\mathcal{I}| + |\mathcal{U}|}$ be eigenvalues of $\tilde{\bm{A}}$. Then $-1 \leq \lambda_1 \leq \lambda_2 \leq \cdots \lambda_{|\mathcal{I}| + |\mathcal{U}|} = 1$.
\end{lemma}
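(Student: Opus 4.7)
The plan is to recognize $\tilde{\bm{A}}$ as the symmetric normalized adjacency matrix of the bipartite user–item graph and then establish the spectral bounds through two positive semi-definite (PSD) quadratic forms. Writing $\bm{D} = \mathrm{diag}(\bm{D}_U, \bm{D}_I)$ so that $\bm{D}^{-1/2}$ is block-diagonal, one checks directly from the block formulas for $\tilde{\bm{R}}$ and $\tilde{\bm{A}}$ that $\tilde{\bm{A}} = \bm{D}^{-1/2} \bm{A} \bm{D}^{-1/2}$, where $\bm{A}$ is the bipartite adjacency in \eqref{graph:bi}. In particular $\tilde{\bm{A}}$ is real symmetric, so its eigenvalues are real and can be controlled via quadratic forms.

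For the upper bound $\lambda_{\max} \le 1$, I would substitute $\bm{y} = \bm{D}^{-1/2} \bm{x}$ and expand
\[
\bm{x}^T(\bm{I}-\tilde{\bm{A}})\bm{x} \;=\; \bm{y}^T(\bm{D}-\bm{A})\bm{y} \;=\; \sum_{(i,j)\in \mathcal{E}} (y_i - y_j)^2 \;\ge\; 0,
\]
which shows $\bm{I}-\tilde{\bm{A}} \succeq 0$ and hence every eigenvalue of $\tilde{\bm{A}}$ is at most $1$. The symmetric companion argument for the lower bound uses
\[
\bm{x}^T(\bm{I}+\tilde{\bm{A}})\bm{x} \;=\; \bm{y}^T(\bm{D}+\bm{A})\bm{y} \;=\; \sum_{(i,j)\in \mathcal{E}} (y_i + y_j)^2 \;\ge\; 0,
\]
so $\bm{I}+\tilde{\bm{A}} \succeq 0$ and every eigenvalue of $\tilde{\bm{A}}$ is at least $-1$.

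To show the upper bound is attained, I would exhibit an explicit eigenvector at eigenvalue $1$. Take $\bm{v} = \bm{D}^{1/2}\bm{1}$; then
\[
\tilde{\bm{A}}\bm{v} \;=\; \bm{D}^{-1/2}\bm{A}\bm{D}^{-1/2}\bm{D}^{1/2}\bm{1} \;=\; \bm{D}^{-1/2}\bm{A}\bm{1} \;=\; \bm{D}^{-1/2}\bm{D}\bm{1} \;=\; \bm{D}^{1/2}\bm{1} \;=\; \bm{v},
\]
using $\bm{A}\bm{1} = \bm{D}\bm{1}$ (the row sums of the adjacency matrix are the degrees). Combined with the PSD argument this pins down $\lambda_{|\mathcal{I}|+|\mathcal{U}|} = 1$ exactly.

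The only real subtlety, and the one thing I would verify carefully, is the standing assumption that $\bm{D}_U$ and $\bm{D}_I$ are invertible so that $\tilde{\bm{R}}$ (and therefore $\tilde{\bm{A}}$) is well defined; this amounts to assuming every user has at least one interaction and every item has been interacted with at least once, which is the standard preprocessing convention in this setting and is implicit throughout Section~\ref{sec:pre}. Given that, the argument above is short and purely algebraic — no spectral graph theory beyond the two PSD identities is needed.
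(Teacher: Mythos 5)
Your proof is correct and follows essentially the same route as the paper: the Laplacian quadratic form $\bm{x}^T(\bm{I}-\tilde{\bm{A}})\bm{x}\ge 0$ for the upper bound and the explicit eigenvector $\bm{D}^{1/2}\bm{1}$ to show the eigenvalue $1$ is attained. Your signless-Laplacian identity $\bm{x}^T(\bm{I}+\tilde{\bm{A}})\bm{x}=\sum_{(i,j)\in\mathcal{E}}(y_i+y_j)^2\ge 0$ in fact makes the lower bound $\lambda_1\ge -1$ explicit, a step the paper's proof only asserts, and your remark on the invertibility of $\bm{D}_U,\bm{D}_I$ is a reasonable standing assumption.
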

\begin{proof}
First, observing that $\forall \bm{x} \in \mathbb{S}^{n-1}$, we have
\begin{align*}
    \bm{x}^T(\bm{I} - \tilde{\bm{A}})\bm{x} = \sum_{i,j} A_{i,j} \left(\frac{x(i)}{\sqrt{d(i)}} - \frac{x(j)}{\sqrt{d(j)}} \right)^2 \geq 0.
\end{align*}
Thus, $-1 \leq \bm{x}^T(\bm{I} - \tilde{\bm{A}})\bm{x} \leq 1$. Furthermore, using the vector $\bm{x} = \bm{D}^{\frac{1}{2}}\bm{1}$, we get 
\begin{align*}
    \tilde{\bm{A}} \bm{x} = \bm{D}^{-\frac{1}{2}}\bm{A}\bm{D}^{-\frac{1}{2}} \bm{D}^{\frac{1}{2}} \bm{1} = \bm{D}^{-\frac{1}{2}}\bm{A} \bm{1} = \bm{D}^{-\frac{1}{2}} \text{diag}(\bm{D}) = \bm{D}^{\frac{1}{2}}\bm{1}.
\end{align*}
This implies that the largest eigenvalue of $\tilde{\bm{A}}$ is $1$.
\end{proof}

\bibliographystyle{ACM-Reference-Format}
\bibliography{sample-base}

\end{document}